\ulposdef{\hlst}{%
    \rlap{\textcolor{yellow}{\rule[-.75ex]{\ulwidth}{2.5ex}}}%
    \rule[.45ex]{\ulwidth}{.1ex}%
}
\theoremstyle{definition}
\newtheorem{definition}{Definition}[section]
\newtheorem{theorem}[definition]{Theorem}
\newtheorem{notation}[definition]{Notation}
\pgfplotsset{compat=1.17}
\tikzset{
half fill/.style 2 args={fill=#2, path picture={
\fill[#1, sharp corners] 
    (path picture bounding box.south) rectangle
    (path picture bounding box.north west) -- cycle;}}}
\newcommand{\circlearrow}{}
\DeclareRobustCommand{\circlearrow}{%
  \mathrel{\vphantom{\rightarrow}\mathpalette\circle@arrow\relax}%
}
\newcommand{\circle@arrow}[2]{%
  \m@th
  \ooalign{%
    \hidewidth$#1\circ\mkern1mu$\hidewidth\cr
    $#1\longrightarrow$\cr}%
}
\title{Studying self-organisation across the biosphere with process-enablement graphs}
\author{\small Emmy Brown$^a$ and Sean T. Vittadello$^{a,b,c,*}$ \\[5pt]
\small $^a$School of Mathematics and Statistics, The University of Melbourne, \\ \small Parkville VIC 3010, Australia \\[5pt]
\small $^b$School of BioSciences, The University of Melbourne, \\ \small Parkville VIC 3010, Australia \\[5pt]
\small $^c$ARC Centre of Excellence for the Mathematical Analysis of Cellular Systems \\[5pt]
\small $^*$Corresponding author: \textit{sean.vittadello@unimelb.edu.au}}
\date{\normalsize \today}
\begin{document}
\maketitle \vspace{-2.5em}
\begin{abstract}
At the heart of many contemporary theories of life is the concept of biological self-organisation: organisms have to continuously produce and maintain the conditions of their own existence in order to stay alive. The way in which these accounts articulate this concept, however, differs quite significantly. As a result, it can be difficult to identify self-organising features within biological systems, and to compare different descriptions of such features. In this paper, we develop a graph theoretic formalism -- process-enablement graphs -- to study the organisational structure of living systems. Cycles within these graphs capture self-organising components of a system in a general and abstract way. We build the mathematical tools needed to compare biological models as process-enablement graphs, facilitating a comparison of their corresponding descriptions of self-organisation in a consistent and precise manner. We apply our formalism to a range of classical theories of life and demonstrate exactly how these models are similar, and where they differ, with respect to their organisational structure. While our current framework does not demarcate living systems from non-living ones, it does allow us to better study systems that lie in the grey area between life and non-life.

\phantom{}

\noindent \textbf{Keywords:} Self-organisation, perspectival realism, autopoiesis, $(M,R)$-systems, autocatalytic sets, constraints, graph theory, organisational closure.
\end{abstract}

\section{Introduction}

Several attempts have been made throughout the past century to construct theories of life, including $(M,R)$-systems \cite{rosen1991life}, autopoiesis \cite{maturana1980autopoiesis}, the chemoton \cite{ganti2003principles}, autocatalytic sets \cite{kauffman1986autocatalytic}, and constraint closure \cite{montevil2015biological}. These accounts are each slightly different from one another, and each emphasises different characteristics of living systems. However, in one way or another they all explore the concept of \textit{organisational closure}, which is the idea that biological systems have to continuously produce and maintain the conditions of their own existence in order to stay alive. In other words, they have to self-organise and self-determine. More precisely, a system is said to realise \textit{organisational closure} if each of its processes is generated by some other process in the system, but also if each process contributes to the generation of some other process in the system \shortcite{jaegernaturalizing, montevil2015biological, moreno2015biological}. As a result, any process in a system that realises organisational closure could not operate without the rest of the processes in the system also unfolding; that is, the process could not exist in isolation.

Recent attempts to formalise and investigate organisational closure have focussed on an understanding of organisational closure as the \textit{closure of constraints} \cite{montevil2015biological, juarrero2023context, nave2025drive}. Here, a constraint is a boundary condition that impinges on relevant processes, and reduces their degrees of freedom. For example, our vasculature acts as a constraint on the flow of blood through our body since, without blood vessels, blood would flow in all directions instead of being neatly transported to our cells. Over different time scales, however, constraints also change like any other process. Using the present example, over longer timescales, our vasculature will change its physical structure through processes like angiogenesis. As such, constraints are themselves processes and therefore, systems that realise a closure of constraints can be considered as a special case of the more general set of systems that realise a closure of processes.

Many non-biological systems like candles, the water cycle, tornadoes and Bénard cells realise a closure of processes, but they do not necessarily realise a closure of constraints \cite{cusimano2020objectivity, mossio2017makes}. Self-organising systems like these (i.e. those that realise a closure of processes) are the precursors for the kinds of organisational dynamics found in self-determining systems, which realise a closure of constraints \cite[p.101]{juarrero2023context}. As shown in \citeA{el2020life}, a better understanding of how systems realise a closure of processes can help us to understand the transition from physicochemical systems to life-constrained, ecological ones. Therefore, a framework to study and analyse how a closure of processes is realised in systems could prove to be valuable for studying both the origins of life and biological organisation in general. We wish to outline such a framework in the remainder of this paper.

To do so, we employ the approach for representing models as simplicial complexes introduced in the model comparison framework developed by \citeA{vittadello2021model, vittadello2022group}, but translate it into the language of directed graphs so that we can model asymmetrical relationships between processes, as opposed to symmetrical interconnections between concepts.

In particular, we consider how processes \textit{enable} one another within a particular model and give a precise definition of enablement. More generally, we define a particular type of graph homomorphism that preserves cycles of processes between models, allowing for fine-grained perspectives to be consistently compared to coarse-grained ones. This contextualises the problem of self-organising features seeming to appear or disappear depending on the level of analysis \cite{cusimano2020objectivity, nahas2023s}. We also avoid the problem of constraints at one time scale becoming processes at another. This allows us to consistently compare the realisation of organisational closure from many different perspectives. In particular, our framework allows lower-level models of biochemical interactions to be compared with more abstract models of the cell, and we can explicitly see when and how organisational closure is achieved in these systems. We give an illustrative example of this aspect of our framework in Section \ref{sect: FA}.

The remainder of this paper is organised as follows. In Section \ref{sect: p&e}, we develop the concepts of \textit{process} and \textit{(direct) enablement} in relation to the biosphere. In Section \ref{sect: $pe$-graphs}, we then use these concepts to define the central object of study in this paper, namely the \textit{process-enablement graph}. We also put forward a mathematical conception of organisational closure in terms of cycles in process-enablement graphs, and develop mathematical tools to consistently compare such cycles between graphs. Section \ref{sect: applications} explores a range of applications of our formalism and explains how to translate abstract models of living systems into the language of process-enablement graphs. We use each example to highlight features of our mathematical framework. Finally, in the Conclusion we note that the formalism presented in this paper is static, not dynamic, and will need to be extended appropriately in order to adequately model evolution, development, the origins of life and biological agency.

\section{Process and enablement} \label{sect: p&e}

The concept at the core of this paper is that of a \textit{process-enablement graph}, so we begin with a discussion of process, enablement, and related notions.
\begin{definition}[Process]
We consider a \textit{process} as an organised sequence of consecutive physicochemical changes that can involve production, destruction, alteration, or movement.
\end{definition}
Our definition of a process is necessarily very general, and is intended to include the myriad context-dependent notions of process in the literature.
\begin{definition}[System]
We call a set of processes interacting contemporaneously with each other a \textit{system}. A subset of the processes within a system which also interact contemporaneously with each other is called a \textit{subsystem}.
\end{definition}
The requirement that the processes in a system \textit{interact} prevents us from placing two very distant processes together into a single system. For instance, we could not say that a particular tree in the Amazon rainforest together with a particular earthworm in the soil here in Melbourne constitute a single system, since neither one will be affected by the absence of the other. However, both organisms are part of the global network of processes within the system of Earth's biosphere, and they may interact (albeit very indirectly) through this network. 

Although these notions of process and system are quite abstract, processes at all levels of the biosphere will be familiar to biologists: biochemical reactions, protein folding, transcription, translation, cell division, action potentials, predation, natural selection, inhalation of oxygen, and digestion of food. Examples of systems should also be familiar, such as: cells, organs, organisms, and ecosystems. Systems themselves are also processes as they are organised sequences of consecutive physicochemical changes. But by using the term \textit{system} we are emphasising that we are thinking of a \textit{network} of interacting processes.

\begin{notation}
Given any process $p$, we can associate it with the region of space $X$ which is occupied by $p$. The space $X$ is generated by $p$ and is inherently tied to it. That is, as the process $p$ interacts with its environment, the space $X$ associated with $p$ will also change. In any case, this assignment allows us to say that $p$ is happening within the region $X$ and is not happening outside of $X$. When required, we write $p_{_X}$ for a process $p$ with corresponding spatial region $X$. 
\end{notation}

To capture the hierarchical nature of processes in biology we introduce the following notion.
\begin{definition}[Process happening within another process]
Let $p_{_X}$ and $q_{_Y}$ be two processes. We say that $p_{_X}$ \textit{happens within} $q_{_Y}$ if $p_{_X}$ and $q_{_Y}$ are occurring contemporaneously and $X$ is contained within $Y$.
\end{definition}
For example, within an adult human, a single red blood cell passing through a capillary is a process that happens within the broader process of blood circulation. 

We now consider the concept of enablement. The notion of enablement has been developed as an important way to understand how the dynamics of the living biosphere differ from the world modelled by physics \shortcite{longo2013extended, kauffman2019world}. In short, biological processes often provide the conditions needed for something else to occur without necessarily causing the latter. For example, the process of trees growing in a forest enables birds to build nests inside them, but the trees don't `cause' the birds to do so. We define enablement precisely as follows.
\begin{definition}[Enablement]
Within a system $S$, we say that a process $p$ \textit{enables} another distinct process $q$ when $p$ has provided some of the necessary conditions for $q$ to occur. The necessity should be understood counterfactually. In other words, if $p$ had not happened, then $q$ would not be happening.
\end{definition}
Note that enablement is defined with respect to the given system $S$, so that $p$ may not be necessary for $q$ to occur within some other system $T$. For example, within the cell, ribosomes are necessary to produce polypeptides from RNA sequences via translation, so the diffusion of ribosomes throughout the cell enables polypeptide synthesis. But in the lab we can make polypeptides synthetically via chemical ligation, without the need for ribosomes, so in this case polypeptide synthesis is not enabled by ribosome diffusion.
\begin{definition}[Direct enablement]
If $p_{_X}$ enables $q_{_Y}$ and there also exists a time interval $\tau$ over which $p_{_X}$ and $q_{_Y}$ are interacting with each other, then we say that $p_{_X}$ enables $q_{_Y}$ \textit{directly}, and we write this as $p_{_X} \to q_{_Y}$ or $p \to q$. By `interacting' we mean that $X$ and $Y$ intersect over $\tau$, and within this spatiotemporal region there is some physical interaction between the two processes. 
\end{definition}
Using the example from the previous paragraph, ribosome production directly enables ribosome diffusion through the cell which in turn directly enables translation. Conversely, in a particular forest, a tree releasing a seed might enable a new plant to grow once it rains. However, this enablement is not direct since there is an intermediate period where the tree is not releasing the seed and the new plant is also not growing. We focus on direct enablements in this paper because they reflect a tighter organisation between processes when compared to indirect enablements. We therefore hypothesise that direct enablements will be a more fruitful tool for studying the organisation of living systems, though we do not justify this claim further here.

To summarise, in order to investigate whether a direct enablement $p_{_X} \to q_{_Y}$ within a system $S$ is genuine or not, we need to check two things. First, we need to see whether the hypothetical removal of $p_{_X}$ from $S$ would result in $q_{_Y}$ not occurring. This ensures that $p_{_X}$ enables $q_{_Y}$. Second, we need to check that there is a time interval $\tau$ over which $p_{_X}$ and $q_{_Y}$ are interacting with each other. This ensures that the enablement is direct.

\section{Mathematical framework} \label{sect: $pe$-graphs}

In this section we introduce and develop our mathematical framework based on our central notion of a process-enablement graph.

\subsection{Process-enablement graphs}
We begin with the required background on graph theory. A \textit{directed graph} $G = (V(G), E(G))$ is a pair of finite sets where $V(G)$ is a nonempty set of \textit{vertices} and $E(G) \subseteq V(G)^2$ is a set of \textit{(directed) edges}. For notational simplicity, we denote an edge $(u,v) \in E(G)$ by $uv$. A directed graph is visualised by representing the vertices as labelled points and the edges as arrows. For example, the edge $uv$ is drawn as $u \to v$. This definition of a directed graph allows for \textit{loops}, which are edges from a vertex to itself, for example $u \to u$, but not parallel edges, which are distinct edges between the same two vertices pointed in the same direction, for example $u \rightrightarrows v$. For the remainder of this paper we refer to directed graphs simply as graphs.

A \textit{subgraph} of $G$ is a graph $H$ such that $V(H) \subseteq V(G)$ and $E(H) \subseteq E(G)$. We say that $G$ contains $H$ and write $H \subseteq G$. If $V(H) \neq V(G)$ or $E(H) \neq E(G)$ then $H$ is a \textit{proper subgraph} of $G$. If $S$ is a nonempty subset of $V(G)$, then the subgraph $G[S]$ of $G$ \textit{induced} by $S$ has vertex set $S$ and edge set $E(G[S]) = \{ uv \in E(G) \mid u,v \in S \}$. A subgraph $H$ of $G$ is an \textit{induced subgraph} if there exists a nonempty subset $A \subseteq V(G)$ such that $H = G[A]$. 

A \textit{walk} $W$ of length $n$ from $v_0$ to $v_n$ is an alternating sequence of vertices and edges $v_0, e_0, v_1, e_1, \ldots, e_{n-1}, v_n$ where $e_i \in \{v_iv_{i+1}, v_{i+1}v_i\} \cap E(G)$ for all $i=0, \ldots, n-1$. If $e_i = v_iv_{i+1}$ for all $i=0, \ldots, n-1$ then $W$ is \textit{directed}. We call $v_0$ the \textit{initial} vertex of $W$, $v_n$ the \textit{final} vertex of $W$, and we say that $W$ is a walk from $v_0$ to $v_n$. If the vertices in $W$ are distinct, then $W$ is a \textit{path}. A \textit{cycle} $C$ of length $n$ is a directed walk of length $n$ such that all vertices are distinct except for the initial and final vertices. Note that a cycle of length $1$ is a loop. A graph $G$ is \textit{connected} if there exists a path from $u$ to $v$ for all pairs of vertices $u, v \in V(G)$.

To compare the structures of two graphs $G$ and $H$ we can consider a \textit{graph homomorphism} $\phi \colon G \to H$, which is a structure preserving map given by a vertex function $\phi \colon V(G) \to V(H)$ such that if $uv \in E(G)$ then $\phi(u)\phi(v) \in E(H)$. We call $G$ the \textit{source} graph and $H$ the \textit{target} graph of $\phi$. For a subgraph $G_0 \subseteq G$ the \textit{induced image} of $G_0$ under $\phi$ is the induced subgraph $H[\phi(V(G_0))]$ of $H$, which we denote by $H[\phi(G_0)]$. A more general notion of a structure preserving map is a \textit{weak graph homomorphism} $\phi \colon G \to H$, which is a vertex function $\phi \colon V(G) \to V(H)$ such that if $uv \in E(G)$ and $\phi(u) \neq \phi(v)$ then $\phi(u)\phi(v) \in E(H)$. Weak graph homomorphisms allow the contraction of edges into vertices, while otherwise preserving the edge structure \cite{hell2004graphs}. Within our context, weak graph homomorphisms allow us to zoom out from lower-level models of a system to view the system more coarsely.

We now define our notion of a \textit{process-enablement graph} for modelling physical systems from a process perspective.

\begin{definition}[Process-enablement graph]
A \textit{process-enablement graph}, or \textit{$pe$-graph}, is a connected graph $G$ where $V(G)$ is a set of processes occurring contemporaneously in a particular system $S$ over a particular time interval $\tau$, and $E(G)$ is the set of direct enablements between the processes.
\end{definition}
We require $pe$-graphs to be \textit{connected} graphs in order to satisfy the \textit{interacting} condition for a system.

\subsection{Organisational closure}

In this section, we define organisational closure in terms of a closure of processes and demonstrate that the fundamental building block of systems that realise organisational closure are cycles of processes. We begin by giving a formal definition of organisational closure, modified from Definition 4 in \citeA{montevil2015biological}, in order to fit the context of $pe$-graphs.

\begin{definition}[Organisational closure]
    A $pe$-graph $G$ realises \textit{organisational closure} if for each process $p \in V(G)$ there exist processes $q$, $r \in V(G)$ such that $qp$, $pr \in E(G)$, that is we have the chain of direct enablements $q \to p \to r$. If $G$ realises organisational closure then we simply say that $G$ is \textit{closed}. $G$ is \textit{strictly} closed if $G$ is closed and no proper subgraph of $G$ is closed.
\end{definition}
Note that a $pe$-graph that is a cycle is closed. We now aim to understand organisational closure in terms of cycles in $pe$-graphs. To do this, we first show that every closed $pe$-graph must contain a cycle.

\begin{theorem}
    If $G$ is a closed $pe$-graph then it must contain a cycle.
    \label{thm: closedhavecycles}
\end{theorem}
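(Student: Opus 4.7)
The plan is to construct a cycle explicitly by iterating the closure hypothesis and exploiting the finiteness of $V(G)$.

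First I would pick an arbitrary starting vertex $v_0 \in V(G)$. Since $G$ is closed, there exists some $v_1 \in V(G)$ with $v_0 v_1 \in E(G)$. Applying the closure condition again to $v_1$ yields a vertex $v_2$ with $v_1 v_2 \in E(G)$. Iterating this step produces an infinite directed walk $v_0, v_1, v_2, \ldots$ in $G$, where each consecutive pair of vertices is joined by a directed edge supplied by the closure property.

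Next, because $V(G)$ is finite by the definition of a graph adopted in the paper, the pigeonhole principle forces a repetition in this sequence. I would let $j$ be the smallest positive integer such that $v_j = v_i$ for some $0 \leq i < j$, and take $i$ to be the corresponding index. Minimality of $j$ guarantees that $v_i, v_{i+1}, \ldots, v_{j-1}$ are pairwise distinct vertices, while $v_j = v_i$. The sequence
\[
v_i,\; v_iv_{i+1},\; v_{i+1},\; v_{i+1}v_{i+2},\; \ldots,\; v_{j-1}v_j,\; v_j
\]
is then a directed walk of length $j - i$ whose vertices are all distinct except for the identification of the initial and final vertex, which is precisely the definition of a cycle in $G$.

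The only real subtlety — which I would not call a serious obstacle — is making sure that the closed directed walk produced is genuinely a cycle and not merely a walk with internal repetitions. This is handled entirely by the choice of $j$ as the \emph{first} index of repetition, which immediately forces the intermediate vertices to be distinct. Everything else is a mechanical application of the closure hypothesis together with finiteness of the vertex set.
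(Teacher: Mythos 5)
Your proof is correct, but it follows a genuinely different route from the paper's. You iterate the ``outgoing'' half of the closure condition ($pr \in E(G)$ for every $p$) to build an arbitrarily long directed walk, then invoke finiteness of $V(G)$ and take the \emph{first} repeated vertex to extract a cycle; the minimality of the repetition index is exactly what guarantees the intermediate vertices are distinct, and the degenerate case $v_{i+1} = v_i$ is still covered since the paper counts loops as cycles of length $1$. The paper instead takes a directed path of maximum length $v_0 \to \cdots \to v_n$ and applies the ``incoming'' half of closure to its initial vertex: the vertex $u$ with $uv_0 \in E(G)$ must already lie on the path (else the path could be extended), and the edge $uv_0$ closes a cycle in one step. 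Both arguments use only half of the closure hypothesis (yours that every vertex has out-degree at least one, the paper's that every vertex has in-degree at least one), so each in fact proves the stronger statement that a finite digraph with no sources, or no sinks, contains a cycle. The paper's version is slightly more economical, needing only a single application of closure; yours is more algorithmic and makes the extraction of the cycle explicit rather than relying on a maximality choice up front.
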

\begin{proof}
    Let $v_0 \to v_1 \to \cdots \to v_n$ be a path of maximum length $n$ in $G$. Since $G$ is closed, there exists $u \in V(G)$ such that $u v_0 \in E(G)$. Then $u = v_i$, for some $i = 0, \ldots, n$, since paths in $G$ have maximum length $n$. So, if $i = 0$ then $u \to v_0$ is a cycle, and if $i = 1, \ldots, n$ then $u \to v_0 \to \cdots \to v_i$ is a cycle.
\end{proof}

Next, we classify strictly closed graphs as cycles, which will allow us to greatly simplify our investigation of organisational closure.

\begin{theorem}[Classification of strictly closed $pe$-graphs]
    A $pe$-graph $G$ is strictly closed if and only if it is a cycle.
    \label{thm: classificationstrictclosed}
\end{theorem}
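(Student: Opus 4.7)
The plan is to prove both directions directly, leaning on Theorem~\ref{thm: closedhavecycles} for the harder direction.

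For the ``if'' direction, suppose $G$ is a cycle $v_0 \to v_1 \to \cdots \to v_{n-1} \to v_0$. Then $G$ is closed by the observation immediately preceding Theorem~\ref{thm: closedhavecycles}. To show strictness, I would consider an arbitrary proper subgraph $H$ and argue it cannot be closed. There are two cases. If some vertex $v_i$ of $G$ is missing from $V(H)$, then in $G$ the predecessor $v_{i-1}$ has $v_i$ as its unique out-neighbour and the successor $v_{i+1}$ has $v_i$ as its unique in-neighbour; whichever of $v_{i-1}$ or $v_{i+1}$ lies in $V(H)$ therefore witnesses that $H$ is not closed. If instead $V(H)=V(G)$ but some edge $v_iv_{i+1}$ is missing, then $v_i$ has no outgoing edge in $H$ and again $H$ is not closed.

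For the ``only if'' direction, suppose $G$ is strictly closed. By Theorem~\ref{thm: closedhavecycles}, $G$ contains a cycle, and I take $C$ to be the subgraph consisting of the vertices and edges of that cycle. Then $C$ is itself a cycle graph in which every vertex has exactly one in-edge and one out-edge within $C$, so $C$ is closed. Since $G$ is strictly closed, no proper subgraph of $G$ can be closed, forcing $C = G$. Hence $G$ is a cycle.

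The main obstacle I anticipate is a technical rather than conceptual one: being careful that ``subgraph'' in the definition of strictly closed does not require the subgraph to itself be a $pe$-graph (in particular, connectedness is not required), so the case analysis in the ``if'' direction must cover subgraphs obtained by deleting either vertices or edges, not just connected subcycles. Once this is pinned down, the argument reduces to the standard observation that a cycle $C_n$ is ``edge-minimally'' and ``vertex-minimally'' closed, and the strict closure of $G$ forces any cycle supplied by Theorem~\ref{thm: closedhavecycles} to exhaust $G$.
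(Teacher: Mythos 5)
Your proposal is correct and follows essentially the same route as the paper: the ``only if'' direction invokes Theorem~\ref{thm: closedhavecycles} and strictness to force the cycle to be all of $G$, and the ``if'' direction verifies that no proper subgraph of a cycle is closed (a claim the paper simply asserts, whereas you spell out the vertex-deletion and edge-deletion cases). One cosmetic point: in the missing-vertex case you should choose $v_i \notin V(H)$ with a neighbour in $V(H)$ (such an $i$ exists since $V(H)$ is a nonempty proper subset of the cycle's vertices), as an arbitrary missing $v_i$ need not have $v_{i-1}$ or $v_{i+1}$ in $V(H)$.
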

\begin{proof}
    If $G$ is strictly closed then, since $G$ is closed, it contains a cycle $C$ by Theorem \ref{thm: closedhavecycles}. Since no proper subgraph of $G$ is closed, $C$ must be the entire graph $G$. Conversely, suppose $G$ is a cycle. Then $G$ is closed. Further, if $H \subseteq G$ is a subgraph of $G$ then $H$ is closed if and only if $H = G$, so no proper subgraph of $G$ is closed. It follows that $G$ is strictly closed. 
\end{proof}

The classification described in Theorem~\ref{thm: classificationstrictclosed} strongly indicates that the `correct' object of analysis for investigating organisational closure is the cycles formed by processes, not just the general and abstract relationships between individual processes. The same conclusion also applies for processes \textit{qua} constraints, as the proof given for Theorem \ref{thm: classificationstrictclosed} relies only on graph theoretic concepts, not processes themselves. As such, an understanding of the cycle structure of process-enablement graphs indicates exactly how and where organisational closure is realised in a particular system.

Further evidence for the primacy of cycles is also provided by Theorem \ref{thm: characterisationclosed} in the Appendix, where we classify closed $pe$-graphs as graphs which only contain cycles connected by directed paths. However, for our purposes, we follow \citeA{montevil2015biological} and find the concept of strict closure to be more useful for our applications in Section \ref{sect: applications}. We will therefore use the term `closure' to refer to strict closure for the remainder of this paper.

\subsection{Perspectival modelling}

There are two key advantages to using $pe$-graphs to study biological self-organisation. First, $pe$-graphs allow us to quickly identify self-organising components within a model of a biological system by examining the cycles within $pe$-graphs. This feature is not particularly unique to our graph theoretic formalism, and there are many alternative methods to formalise, identify and study self-organisation \cite{rosen1991life, maturana1980autopoiesis, montevil2015biological}. However, our graph-theoretic framework is particularly well suited to studying structural properties, such as cycles of processes, within physical systems. Second, the significant advantage of expressing biological models as graphs of processes and enablements is that we can construct graph homomorphisms between $pe$-graphs. This allows us to consistently compare the realisation of self-organisation from different perspectives. We will provide examples of this process in Section \ref{sect: applications}.

It is important to note that every $pe$-graph represents a particular perspective of a system. As a result, for any given system we could draw many distinct $pe$-graphs that each represent a different way of looking at the same system. For instance, if two scientists observe a bird in a forest, a biochemist would likely draw a very different $pe$-graph to describe it compared with an ecologist. This is a fundamental feature of the nature of scientific modelling. Every model represents a particular point of view and a particular way of thinking. Every model leaves things out and distorts reality in one way or another. We agree with \citeA{massimi2022perspectival} that it is only through comparing many models together that we can really begin to understand nature. This doesn't mean that every $pe$-graph is a valid representation of a system. Each time you draw an arrow in a $pe$-graph you are claiming the existence of a genuine lawlike dependency that could be investigated. Thus, whether a $pe$-graph is accurate or not is ultimately up to empirical investigation, not just the whim of a particular scientist.

However, if we want to construct a framework that can accurately compare self-organisation across different perspectives, the graph homomorphisms we use must be able to map self-organising components onto each other. That is, $pe$-graph homomorphisms need to map cycles to cycles. This will allow the biochemist and the ecologist, for example, to compare their points of view with each other whilst maintaining a consistent analysis of the self-organisation that is occurring in the physical system under investigation. We now develop the mathematical tools required to conduct such a comparative analysis of $pe$-graphs, and begin to do so by introducing the following notions of \textit{preservation} and \textit{reflection of closure}.

\begin{definition}[Preservation of closure]
Let $G$ and $H$ be graphs and let $\phi \colon G \to H$ be a weak graph homomorphism. Then $\phi$ \textit{preserves closure} if for every cycle $C \subseteq G$ there exists a cycle $D \subseteq H[\phi(C)]$.
\label{def: presclosure}
\end{definition}

Figure \ref{fig: closure} illustrates the notion of preservation of closure.
\begin{figure}[ht]
    \centering 
    \begin{tikzpicture}[node distance={0.5cm}]
        \node[text=blue] (A) at (0,0)     {$A$};
        \node            (B) at (1,0)     {$B$};
        \node            (G) at (0.5,-1.33) {};
        \draw[->] (A) -- (B);
        \node [draw, fit=(A)(B)] {};
    \end{tikzpicture}
    \begin{tikzpicture}[node distance={0.5cm}]
        \node (A) at (0,0)     {};
        \node (B) at (1.2,0) {};
        \node (C) at (0,-1.3) {};
        \node (D) at (0, 1.3) {};
        \draw[<-] (A) -- (B) node[midway, above] {$\phi$};
    \end{tikzpicture}
    \begin{tikzpicture}[node distance={0.5cm}]
        \node[text=blue] (A1) at (0,0)     {$A_1$};
        \node[text=blue] (A2) at (2,1)     {$A_2$};
        \node[text=blue] (A3) at (0,2)     {$A_3$};
        \node            (B)  at (3.7,1)     {$B$};
        \draw[->, blue] (A1) -- (A2);
        \draw[->, blue] (A2) -- (A3);
        \draw[->, blue] (A3) -- (A1);
        \draw[->] (A2) -- (B);
        \node [draw, fit=(A1)(A2)(A3)(B)] {};
    \end{tikzpicture}
    \begin{tikzpicture}[node distance={0.5cm}]
        \node (A) at (0,0)     {};
        \node (B) at (1.2,0) {};
        \node (C) at (0,-1.3) {};
        \node (D) at (0, 1.3) {};
        \draw[->] (A) -- (B) node[midway, above] {$\psi$};
    \end{tikzpicture}
    \begin{tikzpicture}[node distance={0.5cm}]
        \node[text=blue] (A) at (0,0)     {$A$};
        \node            (B) at (1,0)     {$B$};
        \node            (G) at (0.25,-1.3) {};
        \draw[->] (A) -- (B);
        \draw[->, blue] (A.north) to [out=145,in=215, looseness=4] (A.south);
        \draw (-0.84, 0.7) rectangle (1.5, -0.7);
    \end{tikzpicture}
    \caption{Both $\phi$ and $\psi$ are weak graph homomorphisms, but $\psi$ preserves closure whilst $\phi$ does not. Both $\phi$ and $\psi$ map $A_1 \mapsto A$, $A_2 \mapsto A$, and $A_3 \mapsto A$. The induced image of the cycle in the centre graph, under $\phi$ and $\psi$, is highlighted in blue in their respective target graphs.}
    \label{fig: closure}
\end{figure}
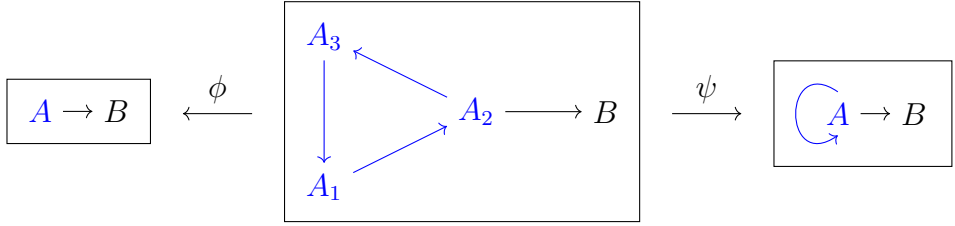

Note that standard graph homomorphisms always preserve closure, but this is not necessarily true for \textit{weak} graph homomorphisms. In Theorem \ref{thm: prestest} in the Appendix, we provide a useful test to check whether a weak graph homomorphism preserves closure. Since we want to be able to contract edges into vertices, but also want homomorphisms between $pe$-graphs to preserve closure, we define homomorphisms of $pe$-graphs as follows.

\begin{definition}[Homomorphism of $pe$-graphs]
Consider a system $S$, and let $G$ and $H$ be $pe$-graphs which model subsystems of $S$ over the same time interval. Let $\phi \colon G \to H$ be a weak graph homomorphism that preserves closure. Then $\phi$ is a \textit{homomorphism of $pe$-graphs} if for every process $p_{_X} \in V(G)$ we have that $p_{_X}$ is happening within $\phi(p_{_X})$. That is, $X$ is contained within $Y$ where $\phi(p_{_X}) = q_{_Y} \in V(H)$.
\end{definition}

We also refer to homomorphisms of $pe$-graphs as \textit{$pe$-graph homomorphisms}, or simply \textit{homomorphisms}, if it is sufficiently clear that we are referring to $pe$-graphs. We will often abbreviate the notation when referring to a $pe$-graph homomorphism and say `$p$ is happening within $\phi(p)$' when we really mean `$p_{_X}$ is happening within $q_{_Y}$' where $q_{_Y} = \phi(p_{_X})$. In Theorems \ref{thm: weakcomp} to \ref{thm: PEcomp} in the Appendix, we show that $pe$-graph homomorphisms compose.

Homomorphisms of $pe$-graphs allow us to translate between two perspectives representing the same system, mapping all of the relevant self-organisational properties of one $pe$-graph to another. However, it is possible for the target graph to contain \textit{more} cycles than the source graph. But if every cycle in the target graph is reflected appropriately in the source graph then we say that a homomorphism \textit{reflects closure}.

\begin{definition}[Reflection of closure]

Let $G$ and $H$ be graphs and let $\phi \colon G \to H$ be a weak graph homomorphism. Then $\phi$ \textit{reflects closure} if for every cycle $D \subseteq H$ there exists a cycle $C \subseteq G$ such that $D \subseteq H[\phi(C)]$.
\label{def: reflclosure}
\end{definition}

A homomorphism which reflects closure provides a correspondence between all of the key self-organisational features of the two $pe$-graphs, namely the processes and the cycles. As such, these maps indicate a greater similarity between two $pe$-graphs. We therefore call these maps \textit{homorheisms} to distinguish them from homomorphisms which do not necessarily reflect closure.

\begin{definition}[Homorheism]
Consider a system $S$, and let $G$ and $H$ be $pe$-graphs which model subsystems of $S$ over the same time interval. Let $\phi \colon G \to H$ be a homomorphism of $pe$-graphs. If $\phi$ also reflects closure then $\phi$ is a \textit{homorheism}. We will denote homorheisms using the modified arrow $\circlearrow$.
\end{definition}

We show in Theorems \ref{thm: reflcomp} to \ref{thm: homorheismcomp} in the Appendix that homorheisms also compose. In Theorem \ref{thm: refltest} in the Appendix, we also provide another simple test to check whether a weak graph homomorphism reflects closure. The notion of a homorheism is illustrated in Figure \ref{fig: reflect}. 
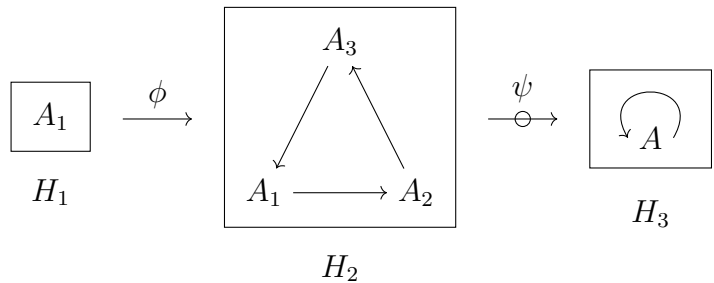
\begin{figure}[ht]
    \centering 
    \begin{tikzpicture}[node distance={0.5cm}]
        \node (A1) at (0,0)     {$A_1$};
        \node (H) at (0,-1) {$H_1$};
        \node (phantom) at (0,-2.18) {};
        \node [draw, fit=(A1)] {};
    \end{tikzpicture}
    \begin{tikzpicture}[node distance={0.5cm}]
        \node (A) at (0,0)     {};
        \node (B) at (1.2,0) {};
        \node (C) at (0,-2.15) {};
        \node (D) at (0, 2.15) {};
        \draw[->] (A) -- (B) node[midway, above] {$\phi$};
    \end{tikzpicture}
    \begin{tikzpicture}[node distance={0.5cm}]
        \node (A1) at (0,0)     {$A_1$};
        \node (A2) at (2,0)     {$A_2$};
        \node (A3) at (1,2)     {$A_3$};
        \node            (H)  at (1,-1)     {$H_2$};
        \draw[->] (A1) -- (A2);
        \draw[->] (A2) -- (A3);
        \draw[->] (A3) -- (A1);
        \node [draw, fit=(A1)(A2)(A3)] {};
    \end{tikzpicture}
    \begin{tikzpicture}[node distance={0.5cm}]
        \node (A) at (0,0)     {};
        \node (B) at (1.2,0) {};
        \node (C) at (0,-2.15) {};
        \node (D) at (0, 2.15) {};
        \draw[->] (A) -- (B) node[midway, above, inner sep={6pt}] {$\psi$};
        \draw (0.6, 0) circle [radius=0.1];
    \end{tikzpicture}
    \begin{tikzpicture}[node distance={0.5cm}]
        \node (A) at (0,0)     {$A$};
        \node (H) at (0,-1) {$H_3$};
        \node (phantom) at (0,-1.9) {};
        \draw[->] (A.east) to [out=60,in=120, looseness=4] (A.west);
        \draw (-0.8, 0.9) rectangle (0.8, -0.4);
    \end{tikzpicture}
    \caption{$H_1$, $H_2$, and $H_3$ are $pe$-graphs. Since $pe$-graph homomorphisms preserve closure, they exist only from $H_1 \to H_2$, $H_1 \to H_3$, and $H_2 \to H_3$, but not from $H_2 \to H_1$, $H_3 \to H_2$, and $H_3 \to H_1$. Of the three $pe$-graph homomorphisms only $\psi \colon H_2 \to H_3$, where $\psi$ maps each vertex in $H_2$ to $A$ in $H_3$, reflects closure and it is therefore a homorheism.}
    \label{fig: reflect}
\end{figure}

Homorheisms allow us to identify two vantage points that convey roughly the same description of the system under investigation, with respect to their self-organisational features. Explicitly, what we mean by this is that every cycle in the source graph is captured by a cycle in the target graph and vice versa. Note, however, that the source graph may be a more finely-grained perspective than the target graph. As such, the source graph may contain far more cycles than the target graph and therefore homorheisms are not an equivalence relation on $pe$-graphs. That is, just because there exists a homorheism $G \circlearrow H$ does not necessarily mean there is a homorheism $H \circlearrow G$.

But if there are \textit{no} homorheisms between two $pe$-graphs that represent the same system, then there is at least one fundamentally different organisational feature between them. Namely one graph must contain a cycle that the other does not capture in any way. Graphs like these are therefore different vantage points that offer significantly different perspectives of the same system, with respect to their organisational structure. It is possible, however, to use intermediate $pe$-graphs to translate between perspectives when there are no homorheisms directly between the graphs available. We will demonstrate this process in Section \ref{sect: FA}. 

As a linguistic note to conclude the discussion here, the term homorheism translates to \textit{same flow}. It is inspired by C. H. Waddington's neologism \textit{homeorhesis} to describe dynamical systems that return to the same dynamical trajectory, following a disturbance \cite{waddington1957strategy}.

\subsection{Self-enablement}

Our formalism allows for a $pe$-graph to have loops, but we have not yet described what this means conceptually. We wish to avoid paradoxes of self-enablement, so we do not want a loop in a $pe$-graph to indicate that a process is directly enabling itself. Instead, we use loops as a convenient shorthand. Consider a system $S$ of processes and direct enablements (the prokaryotic cell is a good example). If this system $S$ contains at least one cycle then instead of writing all of the processes and enablements within $S$ we can simply write $S \to S$. Therefore, a loop within a $pe$-graph should always indicate to the reader that the loop in question is really a finer-grained network of processes containing at least one cycle. There are, therefore, really two kinds of arrows in $pe$-graphs: direct enablements (not loops) and loops.

In contrast, many models often choose to ignore lower-level organisation and treat systems as if they were not self-organising. For instance, organisms within many evolutionary models are treated as objects that can be mostly reduced down to their genes \cite{lewontin1983organism, walsh2018objectcy, chiu2022extended}. For this reason, population genetics has been referred to pejoratively as `beanbag genetics' \cite{rao2011jbs}. We can see this disappearance of the self-organising organism in, for example, the Price equation where the self-organising properties of individuals are ignored in favour of a simple statistical equation \cite{price1970selection, frank2012natural}. Accordingly, if we were to treat such models as reality and then constructed the resulting $pe$-graphs, loops like `Organism $\to$ Organism' would never appear. 

With the use of loops as a shorthand we need to keep in mind that more detailed descriptions of a system may result in the loss of a homorheism. For instance, suppose $G$ is the graph with a single process `Hepatocyte' and a single loop (since cells are self-organising), and $H$ is a similar graph with a single process `Liver' and a single loop. Then there is a natural homorheism $\phi \colon G \circlearrow H$, but clearly $G$ and $H$ model quite different systems. If we unpack the loops in $G$ and $H$ and express these graphs as more detailed networks of the systems in question then we lose the homorheism: once we unpack the process `Liver' we will see how intricate and complex the network here really is, with hundreds of processes and cycles between many different cell types. Indeed, there are far more process-enablement cycles within the liver compared to those just occurring within a single hepatocyte.

\section{Applications} \label{sect: applications}

We have now developed our mathematical framework sufficiently to begin comparing a wide array of perspectives from across the biosphere. In practice, our analysis in the following subsections proceeds by first identifying the processes and enablements within a system, from a particular perspective, and then drawing the corresponding $pe$-graph. Since multiple perspectives are always possible there are bound to be local differences between these graphs. For instance, we may view two processes $p$ and $q$ where $p$ enables $q$ as distinct, so $p \to q$, or we may see $p$ and $q$ as a single compound process. Homomorphisms and homorheisms will allow us to compare these differences precisely and concretely.

\subsection{Autopoiesis} \label{sect: autop}

Humberto Maturana and Francisco Varela together developed the concept of \textit{autopoiesis}, which proved to be a hugely influential account of life \cite{maturana1980autopoiesis}. The concept went through a number of iterations, but we will focus on a later definition given by \citeA[p. 34]{varela2000fenomeno} and discussed in relation to cognition in \citeA[pp. 99--103]{thompson2007mind}. Here, a system is defined as being \textit{autopoietic} if the following criteria all hold:
    \begin{enumerate}
        \item The system has a semipermeable boundary constructed from molecular components. This boundary allows us to discriminate between the inside and outside of the system. \label{auto1}
        \item The components of the boundary are produced by a metabolic reaction network (MRN) located within the boundary. \label{auto2}
        \item There is an interdependence between Criteria \ref{auto1} and \ref{auto2}. That is, the MRN is regenerated and maintained because of the existence of the boundary which keeps all of the molecules in the system in close proximity to each other. \label{auto3}
    \end{enumerate}

For an autopoietic system to be maintained, much like an autocatalytic set, there must also be a ready supply of external nutrients that can flow into the system. Although this final criterion is not made explicit by Maturana and Varela, it is implicitly assumed throughout their work \cite{varela2000fenomeno}. The prototypical example of an autopoietic system is the prokaryotic cell. It is still debated whether multicellular organisms, eusocial insect colonies, or even human societies are also autopoietic. The discussion around these trickier cases typically centres on how one defines a semipermeable boundary \cite{thompson2007mind}. 

From this basic outline we can see that a minimal autopoietic system has three key processes: the MRN, boundary maintenance, and the passive diffusion of food into the system. Food diffusion directly enables the MRN since it provides the basic reactants for the system, and boundary maintenance and the MRN directly enable each other because of Criteria \ref{auto2} and \ref{auto3} (Figure \ref{fig: autop}, $A_1$).
\begin{figure}[ht]
    \centering 
    \begin{tikzpicture}[node distance={0.5cm}]
        \node[align=center] (FD) at  (0,0)     {\footnotesize Food\\ \footnotesize Diffusion};
        \node[align=center] (MRN) at (1.1,2)     {\footnotesize MRN};
        \node[align=center] (BM) at  (2.2,0)     {\footnotesize Boundary\\ \footnotesize Maintenance};
        \node[align=center] (A1) at  (1,-1.4)    {$A_1$};
        \draw[->] (FD) -- (MRN);
        \draw[transform canvas={xshift=1.5ex}, ->] (MRN) --  (BM);
        \draw[transform canvas={xshift=0.3ex}, ->] (BM) --  (MRN);
        \draw (-1, -0.9) rectangle (3.45, 2.7);
    \end{tikzpicture}
    \hspace{0.1cm}
    \begin{tikzpicture}[node distance={0.5cm}]
        \node[align=center] (FD) at  (0,0)     {\footnotesize Food\\ \footnotesize Diffusion};
        \node[align=center] (MRN) at (1.1,2)     {\footnotesize MRN};
        \node[align=center] (BM) at  (2.2,0)     {\footnotesize Boundary\\ \footnotesize Maintenance};
        \node[align=center] (A1) at  (1,-1.2)    {$A_2$};
        \draw[->] (FD) -- (MRN);
        \draw[transform canvas={xshift=1.5ex}, ->] (MRN) --  (BM);
        \draw[transform canvas={xshift=0.3ex}, ->] (BM) --  (MRN);
        \draw[->] (MRN) to [out=55,in=125, looseness=5] (MRN);
        \draw (-1, -0.7) rectangle (3.45, 2.9);
    \end{tikzpicture}
    \hspace{0.1cm}
    \begin{tikzpicture}[node distance={0.5cm}]
        \node[align=center] (T) at  (0,0)     {\footnotesize Transport};
        \node[align=center] (MRN) at (1.25,2)     {\footnotesize MRN};
        \node[align=center] (BM) at  (2.5,0)     {\footnotesize Boundary\\ \footnotesize Maintenance};
        \node[align=center] (A1) at  (1.25,-1.2)    {$A_3$};
        \draw[transform canvas={xshift=-1.5ex}, ->] (T) --  (MRN);
        \draw[transform canvas={xshift=-0.3ex}, ->] (MRN) --  (T);
        \draw[transform canvas={xshift=1.5ex}, ->] (MRN) --  (BM);
        \draw[transform canvas={xshift=0.3ex}, ->] (BM) --  (MRN);
        \draw[->] (MRN) to [out=55,in=125, looseness=5] (MRN);
        \draw[->] (BM) --  (T);
        \draw (-1, -0.7) rectangle (3.72, 2.9);
    \end{tikzpicture}
    \caption{Three different kinds of autopoietic systems represented as $pe$-graphs. MRN = metabolic reaction network. $A_1$: minimal autopoiesis. $A_2$: autopoiesis with an organisationally closed metabolic reaction network. $A_3$: autopoiesis with facilitated diffusion, active transport, and endocytosis of relevant molecules.}
    \label{fig: autop}
\end{figure}
There is no direct requirement that the MRN itself be organisationally closed, but in practice life has incorporated many cycles into its metabolic reaction networks (for example, the Krebs cycle). For these more complex autopoietic systems, the MRN itself achieves organisational closure (Figure \ref{fig: autop}, $A_2$). Some autopoietic systems also play an active role in bringing nutrients and other molecules into the system. For instance, cells often use processes like facilitated diffusion, primary and secondary active transport, and endocytosis to control which molecules enter into their internal metabolic reaction networks \cite{jeckelmann2020transporters}. In these autopoietic systems the MRN directly enables transport by producing the relevant transport proteins (for example GLUTs, Na$^+$/K$^+$-ATPase, and Clathrin), and the maintenance of the boundary directly enables these transport processes since they would not occur without a cell membrane (Figure \ref{fig: autop}, $A_3$).

So using the language of $pe$-graphs we can see at least three distinct kinds of autopoietic systems, with differing levels of organisation (Figure \ref{fig: autop}). In isolation, the $pe$-graphs above are not very interesting. In order to see how and why $pe$-graphs are a useful tool for biologists, we need to introduce a second, complementary, perspective of the cell which we can then compare to autopoiesis.

\subsection{(F,A)-systems}\label{sect: FA}

Robert Rosen used the language of category theory to develop a highly abstract theory of biological organisation which he called $(M,R)$-systems. Rosen's framework was not only mathematically intricate, but it also contained essentially no reference to biochemistry. As a result, Rosen's work has been largely ignored by biologists. However, his ideas have been recently refined by \citeA{hofmeyr2019basic, hofmeyr2021biochemically} to map onto real biochemical processes. The result is what Hofmeyr calls fabrication-assembly $(F,A)$-systems. As considered within the context of the cell, an $(F,A)$-system has three classes of processes: (1) covalent metabolic chemistry (fabrication); (2) supramolecular chemical processes (assembly) and; (3) maintenance of the intracellular milieu via the transporting of ions.

Covalent molecular chemistry or \textit{fabrication} involves the transport of nutrients into the cell, the degradation of nutrients into the building blocks of metabolism (for example amino acids, lipids, and nucleotides) and the synthesis of macromolecules like polypeptides and nucleic acids from these building blocks. Supramolecular chemistry or \textit{assembly} involves the folding of proteins, the self-assembly of ribosomes, and the maintenance of the cell membrane via noncovalent interactions. This kind of chemistry is facilitated by the regulated ionic composition, temperature, and pH of the intracellular milieu, and by chaperone proteins. Finally, the maintenance of the intracellular milieu is primarily driven by the transport of ions into and out of the cell via membrane-bound transporters.

The enablements between these three classes of processes within an $(F,A)$-system can be summarised via three cycles (Figure \ref{fig: intermediate}, $F_A$).
\begin{figure}[ht]
    \centering 
    \begin{tikzpicture}[node distance={0.5cm}]
        \node[align=center, half fill={blue!20}{red!25}] (NT) at (0,2.5)     {\footnotesize Nutrient \\ \footnotesize Transport};
        \node[align=center, half fill ={orange!30}{violet!30}] (MM) at (1.5,0)                  {\footnotesize Membrane\\ \footnotesize Maintenance};
        \node[align=center, half fill={blue!20}{green!20}] (CC) at (3,5)                  {\footnotesize Covalent\\ \footnotesize Chemistry};
        \node[align=center, half fill={orange!30}{green!20}] (PF) at (6,2.5)                  {\footnotesize Protein\\ \footnotesize Folding};
        \node[align=center, half fill={gray!25}{red!25}] (IT) at (4.5,0)                  {\footnotesize Ion\\ \footnotesize Transport};
        \node (I) at (3,6.3)                {$I_\mathcal{P}$};
        \node (phantom) at (0,-1) {};
        \draw[->, shorten <= 0.2cm, shorten >= 0.2cm]  (NT) -- (CC);
        \draw[transform canvas={xshift=1.2ex}, ->, shorten <= 0.2cm, shorten >= 0.2cm]  (MM) -- (CC);
        \draw[->, shorten <= 0.3cm, shorten >= 0.2cm]  (MM) -- (PF);
        \draw[->, shorten <= 0.12cm, shorten >= 0.4cm, transform canvas = {yshift=-1.2ex}]  (PF) -- (MM);
        \draw[->, shorten <= 0.2cm, shorten >= 0.2cm]  (MM) -- (NT);
        \draw[->, shorten <= 0.2cm, shorten >= 0.2cm]  (MM) -- (IT);
        \draw[transform canvas={xshift=1.2ex}, ->, shorten <= 0.2cm, shorten >= 0.2cm]  (IT) -- (PF);
        \draw[transform canvas={xshift=0.3ex}, ->, shorten <= 0.2cm, shorten >= 0.2cm]  (PF) -- (CC);
        \draw[->, shorten <= 0.2cm, shorten >= 0.2cm]  (PF) -- (NT);
        \draw[->, shorten <= 0.2cm, shorten >= 0.2cm]  (PF) -- (IT);
        \draw[->, shorten <= 0.2cm, shorten >= 0.2cm]  (CC) -- (MM);
        \draw[transform canvas={xshift=2ex}, ->, shorten <= 0.2cm, shorten >= 0.2cm]  (CC) -- (PF);
        \node [draw, fit=(IT)(MM)(CC)(PF)(NT), inner sep = 0.4cm] {};
    \end{tikzpicture}
    \hspace{5cm}
    \begin{tikzpicture}[node distance={0.5cm}]
        \node (A) at (1.1,1.1)     {};
        \node (B) at (0,0) {};
        \draw[->] (A) -- (B) node[midway, above left] {$\phi_{_5}$};
        \draw (0.55, 0.55) circle [radius=0.1];
    \end{tikzpicture}
    \hspace{2.2cm}
    \begin{tikzpicture}[node distance={0.5cm}]
        \node (A) at (0,0)     {};
        \node (B) at (1.1,-1.1) {};
        \draw[->] (A) -- (B) node[midway, above right] {$\phi_{_6}$};
        \draw (0.55, -0.55) circle [radius=0.1];
    \end{tikzpicture}
    \hspace{10cm}
    \begin{tikzpicture}[node distance={0.5cm}]
        \node[align=center, fill={blue!20}] (F) at (0,3.5)     {\footnotesize Fabrication};
        \node[align=center, fill={orange!30}] (A) at (0,2)                  {\footnotesize Assembly};
        \node[align=center, fill={gray!25}] (ICM) at (0,0)                  {\footnotesize Maintenance of\\ \footnotesize the intracellular \\ \footnotesize milieu};
        \node (FL) at (0,-1.5)                {$F_A$};
        \draw[->, transform canvas={xshift=0.8ex}, shorten <= 0.2cm, shorten >= 0.2cm]  (F) -- (A);
        \draw[->, transform canvas={xshift=-0.8ex}, shorten <= 0.2cm, shorten >= 0.2cm]  (A) -- (F);
        \draw[->, transform canvas={xshift=0.8ex}, shorten <= 0.2cm, shorten >= 0.2cm]  (A) -- (ICM);
        \draw[->, transform canvas={xshift=-0.8ex}, shorten <= 0.2cm, shorten >= 0.2cm]  (ICM) -- (A);
        \draw[->, shorten <= 0.4cm, shorten >= 0.4cm] (A.north) to [out=145,in=215, looseness=7] (A.south);
        \node [draw, fit=(F)(A)(ICM), inner sep=0.3cm] {};
    \end{tikzpicture}
    \hspace{1.4cm}
    \begin{tikzpicture}[node distance={0.5cm}]
        \node[align=center, fill={red!25}] (T) at  (0,0)     {\footnotesize Transport};
        \node[align=center, fill={green!20}] (MRN) at (1.7,2.5)     {\footnotesize MRN};
        \node[align=center, fill={violet!30}] (BM) at  (3.4,0)     {\footnotesize Boundary\\ \footnotesize Maintenance};
        \node[align=center] (A1) at  (1.7,-1.2)    {$A_3$};
        \node (phantom) at (0,-1.6) {};
        \draw[transform canvas={xshift=-1.8ex}, ->, shorten <= 0.2cm, shorten >= 0.2cm] (T) --  (MRN);
        \draw[transform canvas={xshift=-0.3ex}, ->, shorten <= 0.2cm, shorten >= 0.2cm] (MRN) --  (T);
        \draw[transform canvas={xshift=1.8ex}, ->, shorten <= 0.2cm, shorten >= 0.2cm] (MRN) --  (BM);
        \draw[transform canvas={xshift=0.3ex}, ->, shorten <= 0.2cm, shorten >= 0.2cm] (BM) --  (MRN);
        \draw[->, shorten <= 0.2cm, shorten >= 0.2cm] (MRN) to [out=60,in=120, looseness=8] (MRN);
        \draw[->, shorten <= 0.2cm, shorten >= 0.2cm] (BM) --  (T);
        \draw (-1.2, -0.8) rectangle (4.85, 3.7);
    \end{tikzpicture}
    \caption{A comparison of the $(F,A)$-system model of the cell to an autopoietic perspective using $pe$-graphs. $F_A$: $(F,A)$-system representation. $I_\mathcal{P}$: intermediate $pe$-graph. $A_3$: autopoietic representation. There exist homorheisms $\phi_5 \colon I_\mathcal{P} \circlearrow F_A$ and $\phi_6 \colon I_\mathcal{P} \circlearrow A_3$, which are described using a colouring to match vertices in $I_\mathcal{P}$ with those in $F_A$ or $A_3$ (see Figure \ref{fig: Ipcycles}). There are no homomorphisms of $pe$-graphs between $F_A$ and $A_3$, as explained in the main text. MRN = metabolic reaction network.}   
    \label{fig: intermediate}
\end{figure}
Fabrication enables assembly by providing the raw materials that need to be assembled. This enablement is direct because as macromolecules are being produced within the cell, they immediately begin to self-assemble as they interact with the intracellular milieu. Conversely, assembly enables fabrication by providing the enzymes needed to catalyse the covalent chemistry of the cell. This enablement is direct because enzymes are highly dynamic entities and constantly have to maintain their conformation by interacting with the intracellular milieu \cite{yang2003protein, nicholson2019cell}. As such, the assembly process is not just a one off event, but is an ongoing process, even as enzymes interact with their corresponding substrates. So we have the first cycle: Fabrication $\rightleftarrows$ Assembly.

In turn, the stochastic fluctuations of proteins are modulated by the constantly shifting, local composition of the intracellular milieu. Thus, the maintenance of the intracellular milieu directly enables assembly. Conversely, the maintenance of the cell membrane directly enables ions to be transported across it via membrane transporters, since without a membrane there would be no ion channels. Thus, assembly directly enables the maintenance of the intracellular milieu. This enablement is direct because the cell membrane is constantly shifting and undergoing repair. This gives us a second cycle: Assembly $\rightleftarrows$ Maintenance of the intracellular milieu.

Finally, assembly has a loop because protein folding and cell membrane maintenance, both processes occuring within assembly, directly enable each other. Maintenance of the cell membrane keeps ions within the cell, directly enabling protein folding. Conversely, the folding of enzymes like flippase and scramblase is needed to keep the membrane regulated and intact. 

Both the $(F,A)$-system and autopoiesis are models of the biological cell, and we can model each with the $pe$-graphs $F_A$ and $A_3$, respectively, as shown in Figure \ref{fig: intermediate}. But it is impossible to directly compare $F_A$ and $A_3$ via a $pe$-graph homomorphism. This is because the two perspectives partition the processes of the cell differently. Explicitly, assembly involves both the maintenance of the cell's boundary and protein folding; the latter is a part of the MRN as viewed from the perspective of autopoiesis. So if we wanted a $pe$-graph homomorphism $\phi \colon F_A \to A_3$, we would need to map assembly to both boundary maintenance and the MRN. Conversely, the MRN also involves the synthesis of nucleic acids from building blocks, which is a part of fabrication in $F_A$. So if there were a $pe$-graph homomorphism $\psi \colon A_3 \to F_A$, we would need to map the MRN onto both fabrication and assembly. Since a $pe$-graph homomorphism cannot map a single vertex to two vertices, there are no $pe$-graph homomorphisms between $F_A$ and $A_3$. Yet, both $F_A$ and $A_3$ model the same system, so we would expect some kind of connection between them. Indeed, we can find one, if we construct an intermediate perspective as follows.

First, we need to choose a finer partition $\mathcal{P}$ of the processes described in each of the models. There are many choices we could make here, and different sets of finer-grained processes will reveal different similarities between the models. As an illustrative example we will choose the set $\mathcal{P} :=$ \{nutrient transport, membrane maintenance, protein folding, ion transport, covalent chemistry\}. Here, covalent chemistry is equivalent to $(F,A)$-system fabrication, excluding nutrient transport, and is also equivalent to the autopoietic MRN, excluding protein folding. Membrane maintenance refers to the trafficking of lipids and proteins to the cell membrane, the lateral diffusion of membrane components, the flipping of lipids from one side of the membrane to the other, membrane repair processes, and the containment of cytosolic molecules within the cell as they bounce off the inner plasma membrane. Ion transport refers to the transport of ions into and out of the cell via facilitated diffusion and active transport.

Second, we match each of the processes in $\mathcal{P}$ to processes in the target graphs in question. We have illustrated this matching for our example in Figure \ref{fig: intermediate} using two colourings. We must ensure that the choice of $\mathcal{P}$ is fine enough such that every process $p \in \mathcal{P}$ is only happening within a \textit{single} process in each of the target graphs. This will ultimately allow us to construct well-defined homomorphisms.

Third, we construct the corresponding $pe$-graph, $I_\mathcal{P}$, by drawing in the direct enablements between processes in $\mathcal{P}$. If the arrows are drawn correctly then, using the correspondence from the previous step, we should have weak graph homomorphisms from $I_\mathcal{P}$ to the two target graphs. Finally, we need to check whether these weak graph homomorphisms preserve or reflect closure. If this is successful then we will have indeed constructed $pe$-graph homomorphisms, or possibly homorheisms, from the intermediate $pe$-graph to the two target graphs. We prove this for our case in Theorems \ref{thm: intermedpres} and \ref{thm: intermedrefl}. If, however, the weak graph homomorphisms do not preserve closure then we can choose a different set $\mathcal{P}$ and repeat the steps above. Alternatively, if we cannot construct homomorphisms from $I_\mathcal{P}$ to the target graphs then it may be an indication that there is a misplaced or missing arrow in these target graphs. Accordingly, this process of constructing an intermediate $pe$-graph can facilitate model development, as well as improve existing models. 

The existence of the graph $I_\mathcal{P}$ and corresponding homorheisms $\phi_5 \colon I_\mathcal{P} \to F_A$ and $\phi_6 \colon I_\mathcal{P} \to A_3$ in Figure \ref{fig: intermediate} gives some key insights. First, by providing a lower-level mechanistic model, the apparent discrepancy between $F_A$ and $A_3$ can be clarified as being different partitions of the same underlying groups of processes. The homorheisms $\phi_5$ and $\phi_6$ tell scientists exactly how these two models view the same set of processes and enablements differently, allowing for these two perspectives to be consistently and precisely compared. For example, from the perspective of autopoiesis, covalent chemistry and protein folding are both considered a part of the metabolic reaction network, but from the perspective of fabrication-assembly systems, the former is part of fabrication but the latter is part of assembly.

Second, the homorheisms $\phi_5$ and $\phi_6$ give two different explanations for the arrangement of process-enablement relations in $I_\mathcal{P}$. In isolation, it is not clear why the processes in $I_\mathcal{P}$ belong in the same $pe$-graph. Indeed, there were many choices available for the set of processes $\mathcal{P}$ and we chose the ones we did somewhat arbitrarily. However, this apparent arbitrariness is reduced via the homorheisms $\phi_5$ and $\phi_6$ as they collect several lower-level processes and enablements together into fewer higher-level processes and enablements. This brings disparate lower-level processes together and reveals higher-level connections between groups of processes. In an Aristotelian sense, the graphs $F_A$ and $A_3$ provide formal causes for the graph $I_\mathcal{P}$, realised through the corresponding homomorphisms. For example, ion transport and nutrient transport have no direct enablement between them in $I_\mathcal{P}$, but we can understand their co-existence in this graph because both are mapped via $\phi_6$ to the transport process in $A_3$, which is in turn enabled by the metabolic reaction network and boundary maintenance processes in $A_3$.

Third, and perhaps most crucially, the existence of the homorheisms $\phi_5$ and $\phi_6$ ensure that the \textit{organisational} features of $I_\mathcal{P}$ are captured in $F_A$ and $A_3$ respectively, not just the physical processes. This is a unique feature of the $pe$-graph framework. We could have certainly argued that autopoiesis and $(F,A)-$systems are compatible with the graph in $I_\mathcal{P}$ via several different means, but the graph theoretic nature of homorheisms ensures that the \textit{cycles} in each graph correspond to each other. This yields a highly precise confirmation that the self-organisation in one perspective is entirely captured in another. For example, the cycle Protein Folding $\rightleftarrows$ Membrane Maintenance in $I_\mathcal{P}$ maps onto Assembly $\circlearrowleft$ in $F_A$ and MRN $\rightleftarrows$ Boundary Maintenance in $A_3$.

The precise comparision facilitated by our formalism allows us to articulate and study the same self-organising processes from three different perspectives, enabling scientists to see exactly how each perspective understands the nature of their particular self-organising subsystems differently. Another important remark is that because $\phi_5$ and $\phi_6$ are homorheisms, not just homomorphisms, all of the cycles in $F_A$ (respectively $A_3$) can be found in $I_\mathcal{P}$ and are therefore embodied in $A_3$ (respectively $F_A$) via $\phi_6$ (respectively $\phi_5$). This means that we have not lost any of the self-organisational features when we move from $F_A$ to $A_3$ or vice versa, as they are just re-articulated in a different way and into different cycles. In short, we could simply say that $F_A$ and $A_3$ model the same self-organising processes in the cell, but they each have a different focus as to what they consider important, reflected in their cycle structures. In the Appendix, we consider the $pe$-graph $I_\mathcal{P}$ in more detail, showing that each of the edges in $I_\mathcal{P}$ is a direct enablement (Figure \ref{fig: Ipcycles}).

\subsection{Autocatalytic sets} \label{sect: autocat}

In this section, we wish to demonstrate the wide applicability of the $pe$-graph framework by applying it to a category of self-organising systems that are not strictly \textit{biological} systems -- autocatalytic sets. The concept of autocatalytic sets emerged during the 1980s \cite{dyson1982model, kauffman1986autocatalytic}. Dyson and Kauffman were both interested in the origins of life, and wanted to explore how random sets of chemical reactions could come together to form an organised network. Their ideas were then refined via a mathematical formalism known as RAF sets: \textbf{r}eflexively \textbf{a}utocatalytic systems generated by a \textbf{f}ood set \cite{hordijk2004detecting, hordijk2012structure}. Here, \textit{reflexive autocatalysis} means that every reaction in the system is catalysed by at least one molecule already in the system, and \textit{food-generated} means that all the molecules in the system can be ultimately generated from the food set through chains of reactions within the system (Figure \ref{fig: RAF}).

\begin{figure}[ht]
    \centering 
    \begin{tikzpicture}[node distance={0.5cm}]
        \node (F1)   at (4,0.8)       {$F_1$};
        \node (F2)   at (1,0)       {$F_2$};
        \node (F3)   at (4,0)       {$F_3$};
        \node (F4)   at (7,0)       {$F_4$};
        \node (B1)   at (4,4.5)       {$B_1$};
        \node (B2)   at (3,1.8)       {$B_2$};
        \node (B3)   at (4.9,1.7)       {$B_3$};
        \node (R1)   at (1,2)       {Reaction 1};
        \node (R2)   at (7,2)       {Reaction 2};
        \node (R3)   at (4,3)       {Reaction 3};
        \draw[->] (F3) -- (R1);
        \draw[->] (F2) -- (R1);
        \draw[->] (F3) -- (R2);
        \draw[->] (F4) -- (R2);
        \draw[->] (R3) -- (B2);
        \draw[->] (B3) -- (R3);
        \draw[->] (F1) -- (R3);
        \draw[->] (R2) -- (B3);
        \draw[->] (R1) -- (B1);
        \draw[->, dashed] (F1) -- (R1);
        \draw[->, dashed] (B1) -- (R3);
        \draw[->, dashed] (B1) -- (R2);
    \end{tikzpicture}
    \caption{A RAF set. The food set is $\mathcal{S} = \{F_1, F_2, F_3, F_4\}$, while $B_1$, $B_2$, and $B_3$ are intermediate molecules produced by the network. Dotted arrows indicate catalysis and solid arrows represent molecules entering/exiting a reaction. Modified from Figure 4 in Hordijk, Steel, and Kauffman (2012).}
    \label{fig: RAF}
\end{figure}

To study autocatalytic sets using the language of $pe$-graphs, we first need to consider the kinds of processes and enablement relationships present within these chemical systems. The processes in each case are relatively clear: they are the chemical reactions of the autocatalytic set, together with the process of food diffusing freely into the system. Since at least one food molecule is required for each reaction displayed in Figure \ref{fig: RAF}, food diffusion must directly enable each reaction. However, whether reactions directly enable each other depends on the nature of molecule degradation and the rate constants of the reactions. For instance, if molecule $B_3$ degrades rapidly within the system, then Reaction 2 will directly enable Reaction 3 because Reaction 2 will need to continuously produce $B_3$ in order to keep Reaction 3 going (Figure \ref{fig: RAF}). On the other hand, if $B_3$ is more stable and remains abundant within the system then this direct enablement could be undermined. For example, if $B_3$ has been stockpiled by the system in large quantities, then the occurrence of Reaction 2 may be irrelevant to Reaction 3. For now, we will assume that this RAF set operates in a high degradation environment, so that whenever a reaction ceases then its respective products disappear almost instantaneously from the system\footnote{We could also allow for a low-degradation environment, but add the diffusion processes of each molecule into the $pe$-graph. For simplicity we instead assume a high-degradation environment, yielding direct enablements between reactions and ignoring the diffusion processes.}. We can then draw two $pe$-graphs to describe the RAF sets: $R_1$ and $R_2$ (Figure \ref{fig: $pe$-RAF}). 

\begin{figure}[ht]
    \centering 
    \begin{tikzpicture}[node distance={0.5cm}]
        \node[align=center] (F) at (2,0)     {\footnotesize Food \\ \footnotesize Diffusion};
        \node[align=center] (R1) at (0,2)                  {\footnotesize Reaction\\ \footnotesize 1};
        \node[align=center] (R2) at (2,4)                  {\footnotesize Reaction\\ \footnotesize 2};
        \node[align=center] (R3) at (4,2)                  {\footnotesize Reaction\\ \footnotesize 3};
        \node (R) at (2,-1.2)                {$R_1$};
        \draw[->]  (F) -- (R2);
        \draw[->]  (F) -- (R1);
        \draw[->]  (F) -- (R3);
        \draw[->]  (R1) -- (R2);
        \draw[->]  (R2) -- (R3);
        \draw[->]  (R1) -- (R3);
        \node [draw, fit=(F)(R1)(R2)(R3)] {};
    \end{tikzpicture}
    \begin{tikzpicture}[node distance={0.5cm}]
        \node (A) at (0,0)     {};
        \node (B) at (1.4,0) {};
        \node (C) at (0,-3.3) {};
        \node (D) at (0, 3.3) {};
        \draw[->] (A) -- (B) node[midway, above, inner sep={6pt}] {$\phi_{_1}$};
    \end{tikzpicture}
    \begin{tikzpicture}[node distance={0.5cm}]
        \node[align=center] (F) at (2,0)     {\footnotesize Food \\ \footnotesize Diffusion};
        \node[align=center] (R1) at (0,2)    {\footnotesize Reactions\\ \footnotesize 1+3};
        \node[align=center] (R3) at (4,2)    {\footnotesize Reaction\\ \footnotesize 2};
        \node (R) at (2,-1.2)                {$R_2$};
        \node (phantom) at (0,-1.78) {};
        \draw[->]  (F) -- (R1);
        \draw[->]  (F) -- (R3);
        \draw[->]  (R1.north) to [out=40,in=140] (R3.north);
        \draw[->]  (R3) -- (R1);
        \draw (-1.2, -0.7) rectangle (5.1, 3.6);
    \end{tikzpicture}
    \caption{Two $pe$-graph representations of the RAF set from Figure \ref{fig: RAF}.}
    \label{fig: $pe$-RAF}
\end{figure}

Interestingly, the $pe$-graph $R_1$ does not have a cycle, whilst $R_2$ does (Figure \ref{fig: $pe$-RAF}). This demonstrates that organisational closure is partially perspective dependent. Unlike \citeA{cusimano2020objectivity}, we do not see this failure of objectivity as a disadvantage of the $pe$-graph framework. Rather, we see it is a significant advantage because it introduces a way to identify qualitatively different analyses of the system under investigation. A comparison of these perspectives can then inform future empirical investigations and help us to better understand the system. For instance, if we see Reactions 1 and 3 as really two parts of the broader process `Reactions $1 + 3$', then the analysis provided in $R_2$ will make more sense. The system contains a cycle and this part of the system therefore achieves organisational closure.

We can also connect this perspective of life to autocatalytic sets by identifying homomorphisms between the two accounts. To do this, we first need to consider a system that could be viewed either through the lens of autopoiesis or autocatalytic sets. For example, we can imagine an extended version of the RAF set in Figure \ref{fig: RAF} where molecule $B_2$ is a lipid which is produced in sufficient quantity to form a semipermeable boundary. This boundary then keeps all of the relevant molecules of the RAF set together, and is completely permeable to the food molecules $F_1$, $F_2$, $F_3$, and $F_4$. Depending on how we view the RAF set as discussed in the previous section, and depending on how much of the system we would like to study, this analysis yields four related $pe$-graphs (Figure \ref{fig: RAF-autop}). 

Again, the homomorphisms $\phi_2$ and $\phi_3$ in Figure \ref{fig: RAF-autop} allow us to translate between perspectives and give us more information about each model. On the one hand, the $pe$-graphs $R_1$ and $R_2$ unravel the metabolic reaction network processes in $A_1$ and $A_2$ and provide a more detailed mechanistic explanation for how the metabolic reaction is realised. But on the other hand, the $pe$-graphs $A_1$ and $A_2$ provide an explanation for why the reaction networks in $R_1$ and $R_2$ happen to exist in the first place: they are enabled via the boundary maintenance processes in $A_1$ and $A_2$.

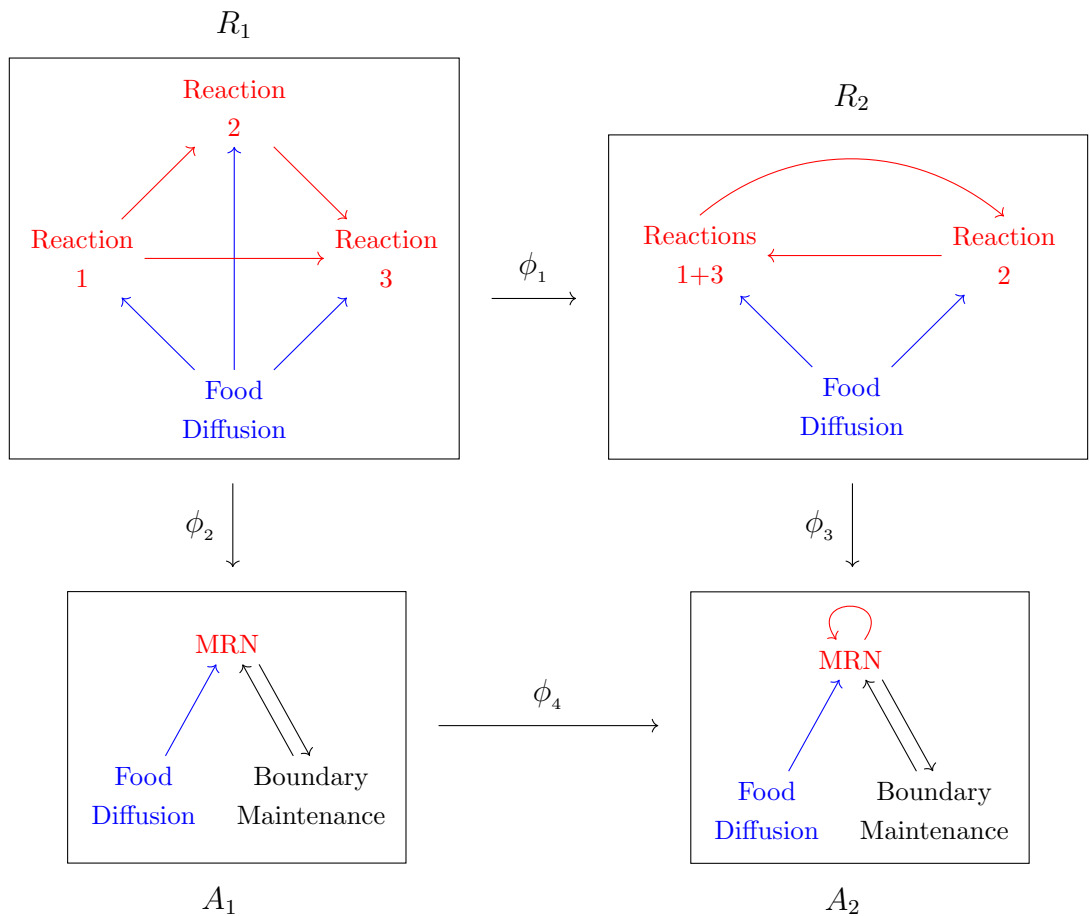
\begin{figure}[ht]
    \centering 
    \begin{tikzpicture}[node distance={0.5cm}]
        \node[align=center, text=blue] (F) at (2,0)     {\footnotesize Food \\ \footnotesize Diffusion};
        \node[align=center, text=red] (R1) at (0,2)                  {\footnotesize Reaction\\ \footnotesize 1};
        \node[align=center, text=red] (R2) at (2,4)                  {\footnotesize Reaction\\ \footnotesize 2};
        \node[align=center, text=red] (R3) at (4,2)                  {\footnotesize Reaction\\ \footnotesize 3};
        \node (R) at (2,5.1)                {$R_1$};
        \draw[->, blue]  (F) -- (R2);
        \draw[->, blue]  (F) -- (R1);
        \draw[->, blue]  (F) -- (R3);
        \draw[->, red]  (R1) -- (R2);
        \draw[<-, red]  (R3) -- (R2);
        \draw[->, red]  (R1) -- (R3);
        \node [draw, fit=(F)(R1)(R2)(R3)] {};
    \end{tikzpicture}
    \begin{tikzpicture}[node distance={0.5cm}]
        \node (A) at (0,0)     {};
        \node (B) at (1.4,0) {};
        \node (C) at (0,-2) {};
        \node (D) at (0, 2) {};
        \draw[->] (A) -- (B) node[midway, above, inner sep={6pt}] {$\phi_{_1}$};
    \end{tikzpicture}
    \begin{tikzpicture}[node distance={0.5cm}]
        \node[align=center, text=blue] (F) at (2,0)     {\footnotesize Food \\ \footnotesize Diffusion};
        \node[align=center, text=red] (R1) at (0,2)    {\footnotesize Reactions\\ \footnotesize 1+3};
        \node[align=center, text=red] (R2) at (4,2)    {\footnotesize Reaction\\ \footnotesize 2};
        \node (R) at (2,4.1)                {$R_2$};
        \draw[->, blue]  (F) -- (R1);
        \draw[->, blue]  (F) -- (R2);
        \draw[->, red]  (R1.north) to [out=40,in=140] (R2.north);
        \draw[->, red]  (R2) -- (R1);

        \draw (-1.2, -0.7) rectangle (5.1, 3.6);
    \end{tikzpicture}
    \begin{tikzpicture}[node distance={0.5cm}]
        \node (A) at (0,0)     {};
        \node (B) at (0,-1.4) {};
        \node (C) at (-2.5,0) {};
        \node (D) at (2.5, 0) {};
        \draw[->] (A) -- (B) node[midway, left, inner sep={6pt}] {$\phi_{_2}$};
    \end{tikzpicture}
    \hspace{3.1cm}
    \begin{tikzpicture}[node distance={0.5cm}]
        \node (A) at (0,0)     {};
        \node (B) at (0,-1.4) {};
        \node (C) at (-2,0) {};
        \node (D) at (2.65, 0) {};
        \draw[->] (A) -- (B) node[midway, left, inner sep={6pt}] {$\phi_{_3}$};
    \end{tikzpicture}
    \begin{tikzpicture}[node distance={0.5cm}]
        \node[align=center, text=blue] (FD) at  (0,0)     {\footnotesize Food\\ \footnotesize Diffusion};
        \node[align=center, text=red] (MRN) at (1.1,2)     {\footnotesize MRN};
        \node[align=center] (BM) at  (2.2,0)     {\footnotesize Boundary\\ \footnotesize Maintenance};
        \node[align=center] (A1) at  (1,-1.4)    {$A_1$};
        \draw[->, blue] (FD) -- (MRN);
        \draw[transform canvas={xshift=1.5ex}, ->] (MRN) --  (BM);
        \draw[transform canvas={xshift=0.3ex}, ->] (BM) --  (MRN);
        \draw (-1, -0.9) rectangle (3.45, 2.7);
    \end{tikzpicture}
    \begin{tikzpicture}[node distance={0.5cm}]
        \node (A) at (0,0)     {};
        \node (B) at (3.18,0) {};
        \node (C) at (0,-2.5) {};
        \draw[->] (A) -- (B) node[midway, above, inner sep={6pt}] {$\phi_{_4}$};
    \end{tikzpicture}
    \begin{tikzpicture}[node distance={0.5cm}]
        \node[align=center, text=blue] (FD) at  (0,0)     {\footnotesize Food\\ \footnotesize Diffusion};
        \node[align=center, text=red] (MRN) at (1.1,2)     {\footnotesize MRN};
        \node[align=center] (BM) at  (2.2,0)     {\footnotesize Boundary\\ \footnotesize Maintenance};
        \node[align=center] (A1) at  (1,-1.2)    {$A_2$};
        \draw[->, blue] (FD) -- (MRN);
        \draw[transform canvas={xshift=1.5ex}, ->] (MRN) --  (BM);
        \draw[transform canvas={xshift=0.3ex}, ->] (BM) --  (MRN);
        \draw[->, red] (MRN) to [out=55,in=125, looseness=5] (MRN);
        \draw (-1, -0.7) rectangle (3.45, 2.9);
    \end{tikzpicture}
    \caption{Process-enablement graph homomorphisms between autocatalytic set representations and autopoietic representations. Each arrow between graphs is a homomorphism and $\phi_2$ and $\phi_3$ are homorheisms onto their induced images. The map $\phi_2$ (respectively $\phi_3$) sends each of the reactions in $R_1$ (respectively $R_2$) to the MRN in $A_1$ (respectively $A_2$). MRN = metabolic reaction network.}
    \label{fig: RAF-autop}
\end{figure}

\section{Conclusions}

Over the past century, many different authors have come to the same conclusion that biological systems have to continuously reproduce the conditions of their own existence in order to stay alive \cite{cornish2020contrasting}. More recently, biological self-organisation and self-determination have been formalised through the concept of \textit{organisational closure} \cite{moreno2015biological, montevil2015biological}. In this paper we have developed a tool, the process-enablement graph or simply $pe$-graph, which not only allows us to identify organisational closure, but also allows us to compare how different scientific perspectives understand the nature of self-organisation.

We discern organisationally closed components of a system as cycles within $pe$-graphs, and use homomorphisms of $pe$-graphs to map cycles to cycles. These homomorphisms allow us to precisely and concretely examine how self-organisation is articulated under different perspectives, and yields direct correspondences between the relevant cycles.

We use $pe$-graphs to conduct an analysis of autopoiesis, autocatalytic sets and $(F, A)$-systems \cite{maturana1980autopoiesis, kauffman1986autocatalytic, hofmeyr2021biochemically}. We demonstrate that autopoiesis and $(F,A)$-systems describe the same self-organising phenomena, but from distinctly different perspectives. The $pe$-graph framework allows us to see exactly how the self-organising subsystems are articulated differently by each model. In addition, we show that whether autocatalytic sets realise organisational closure or not is perspective dependent, and is not entirely objective \cite{cusimano2020objectivity}.

One limitation of the framework presented in this paper is that $pe$-graphs are static. In order to study the evolution of physical systems, we would need to be able to describe how $pe$-graphs can qualitatively change. This would facilitate the application of $pe$-graphs to fields like developmental and evolutionary biology where systems, and the processes that constitute them, change over time. Indeed, by using our framework to investigate how and when organisational closure is realised in a sequence of $pe$-graphs we could even study the emergence of life from abiotic systems. We intend to develop our framework in this direction in future work.

As a final point, we would like to highlight that our framework of $pe$-graphs and the homomorphisms between them is a kind of \textit{universal biology} \cite{mariscal2018we}. To do universal biology is to study life as it \textit{must} be, rather than studying life as it exists in this or that system. In this light we argue that all life \textit{must} be self-organising \cite{jaegernaturalizing, rosen1991life}. Nonetheless, our approach is not exclusively focused on living matter, nor does it give a clear demarcation between biotic and abiotic systems. This is a significant advantage as it can help us to understand `life's edge' and to study the transitions between life and non-life \cite{smith1997major}. That is, our framework allows us to study how self-organisation may arise in \textit{any} system. Thus, $pe$-graphs are a powerful tool both to advance our theories of life and to better understand self-organising systems.

\section*{Acknowledgements}

EB would like to thank L{\'{e}}o Diaz (University of Melbourne) and Adriana Zanca (University of Melbourne) for insightful discussions and comments. We would also like to thank the two anonymous reviewers for comments that helped us to significantly improve this manuscript.

\bibliographystyle{apacite}
\bibliography{bibliography.bib}

\appendix
\section{Appendix}
\renewcommand\thefigure{\thesection.\arabic{figure}}
\renewcommand\thetable{\thesection.\arabic{table}}
\setcounter{figure}{0}

\subsection{Classification of closed $pe$-graphs}

In the main text, we give a classification of \textit{strictly} closed $pe$-graphs. Here, we give a more general classification of closed $pe$-graphs as a whole, providing further evidence for the importance of cycles to studying organisational closure.

\begin{theorem}[Classification of closed $pe$-graphs]
    A $pe$-graph $G$ is closed if and only if, for each $v \in V(G)$, either $v$ is contained in a cycle, or $v$ is contained in a directed path with initial and final vertices contained in distinct cycles.
    \label{thm: characterisationclosed}
\end{theorem}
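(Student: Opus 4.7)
The plan is to prove both directions directly using finiteness of $V(G)$. The reverse direction reduces to a short case analysis on the position of a vertex in the hypothesised path or cycle; the substance lies in the forward direction, where I will extract cycles by walking forward and backward from each vertex.

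For the forward direction I assume $G$ is closed, so every vertex has in-degree and out-degree at least one. Fix $v \in V(G)$. Starting at $v$ and repeatedly choosing an outgoing edge produces a sequence $v = w_0, w_1, w_2, \ldots$ in the finite set $V(G)$; some vertex must eventually repeat, and taking the minimal repetition yields a directed cycle $C_f$ reachable from $v$, together with a (possibly empty) directed path from $v$ into $C_f$. Dualising the construction using incoming edges gives a cycle $C_b$ with a directed path from $C_b$ to $v$. If $v$ already lies on $C_f$ or $C_b$, then the first disjunct holds and we are done. Otherwise, concatenating the backward path into $v$ with the forward path out of $v$ yields a directed path whose initial vertex lies in $C_b$, whose final vertex lies in $C_f$, and which contains $v$ internally. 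If $C_f \neq C_b$ this is the required path. The delicate subcase is $C_f = C_b$: here I extend the concatenated walk by travelling around the common cycle from its final vertex back to its initial vertex, producing a closed directed walk through $v$; the shortest closed sub-walk based at $v$ then yields a cycle through $v$, returning us to the first disjunct.

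For the reverse direction, assume every vertex satisfies the structural condition and fix $v$; it suffices to produce an incoming and an outgoing edge at $v$. If $v$ lies on a cycle, its cycle-predecessor and cycle-successor supply both edges. Otherwise $v = u_i$ on a directed path $u_0 \to u_1 \to \cdots \to u_n$ with $u_0 \in C_1$ and $u_n \in C_2$ for distinct cycles $C_1, C_2$. A case split on $i$ handles the remaining work: for $0 < i < n$ the adjacent path vertices $u_{i-1}, u_{i+1}$ supply both edges; for $i = 0$ the edge $u_0 \to u_1$ is outgoing while the cycle $C_1$ provides an incoming edge at $u_0$; and symmetrically for $i = n$. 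In all cases $v$ has the required neighbours, so $G$ is closed.

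I expect the main obstacle to be the $C_f = C_b$ subcase of the forward direction: the naive concatenation only produces a path from a cycle back to itself, and it requires the closed-walk-to-cycle extraction to promote this configuration to a genuine cycle through $v$. Everything else is routine finite-graph bookkeeping, and no new machinery beyond the definitions of closure and cycle is needed.
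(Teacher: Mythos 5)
Your reverse direction is fine and matches the paper's (the paper simply asserts the existence of the required in- and out-neighbours; your case split on the position of $v$ in the path is the intended justification). Your forward direction takes a genuinely different route from the paper's --- the paper picks a directed path through $v$ of maximum length among those all of whose vertices avoid cycles, then uses closure to prepend a vertex $x$ and append a vertex $y$, which by maximality must lie in cycles, and in distinct cycles since otherwise $v$ would lie on a cycle --- but your version has a gap at the step ``concatenating the backward path into $v$ with the forward path out of $v$ yields a directed path.'' The two paths are produced by independent forward and backward searches from $v$, so nothing prevents them from sharing a vertex $z \neq v$; in that case the concatenation is only a walk, and your subsequent claim ``if $C_f \neq C_b$ this is the required path'' does not follow. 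Checking that $v$ does not lie on $C_f$ or $C_b$ does not exclude this situation: when the paths overlap at $z$, the vertex $v$ does lie on a cycle (extracted from the closed walk $v \to \cdots \to z \to \cdots \to v$), but that cycle need not be $C_f$ or $C_b$, so your ``otherwise'' branch is entered holding a non-path.

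The repair is the same device you already deploy for the $C_f = C_b$ subcase: if the forward and backward paths meet at some $z \neq v$, splice the forward segment from $v$ to $z$ with the backward segment from $z$ to $v$ into a positive-length closed directed walk based at $v$, and extract a cycle through $v$ by repeatedly excising the segment between two occurrences of a repeated internal vertex; this lands in the first disjunct. With that case added (and with that excision lemma stated once, since you also rely on it for $C_f = C_b$), your argument goes through and is a legitimate alternative to the paper's. The paper's maximal-path formulation avoids this bookkeeping entirely, because every vertex of its central path avoids cycles while the two appended endpoints lie in cycles, so the disjointness needed for the final object to be a path is automatic.
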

\begin{proof}
    Suppose $G$ is closed and $v \in V(G)$ is not contained in a cycle. Let $W$ be a directed path containing $v$, such that no vertex in $W$ is contained in a cycle, and $W$ has maximum length. Denote by $u$ and $w$ the initial and final vertices of $W$, respectively. Since $G$ is closed, there exist vertices $x$, $y \in V(G)$ such that $x u$, $wy \in E(G)$. Further, $x$ and $y$ are contained in cycles, since $W$ has maximum length. Since $v$ is not contained within a cycle, these cycles must be distinct. Therefore, $v$ is contained in a directed path with initial and final vertices in distinct cycles.

    Conversely, suppose that, for each $v \in V(G)$, either $v$ is contained in a cycle, or $v$ is contained in a directed path with initial and final vertices contained in distinct cycles. Then, there must exist $u$, $w \in V(G)$ such that $u \to v \to w$. So $G$ is closed.
    \end{proof}

\subsection{Properties of homomorphisms and homorheisms}

Here we prove some useful results which allow us to more readily use homomorphisms and homorheisms in practice. 

\subsubsection{Composition of homomorphisms and homorheisms}

First, we show that weak graph homomorphisms compose.

\begin{theorem}
    Let $G_1$, $G_2$, and $G_3$ be graphs, and let $\phi \colon G_1 \to G_2$ and $\psi \colon G_2 \to G_3$ be weak graph homomorphisms. Then $\psi \circ \phi \colon G_1 \to G_3$ is a weak graph homomorphism.
    \label{thm: weakcomp}
\end{theorem}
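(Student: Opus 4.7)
The plan is to verify the weak graph homomorphism condition directly from the definition. The composition $\psi \circ \phi$ is automatically a well-defined vertex function $V(G_1) \to V(G_3)$, so the only nontrivial step is to check the edge condition: whenever $uv \in E(G_1)$ and $(\psi \circ \phi)(u) \neq (\psi \circ \phi)(v)$, we must produce the edge $(\psi \circ \phi)(u)(\psi \circ \phi)(v)$ in $E(G_3)$.

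I would fix an arbitrary edge $uv \in E(G_1)$ and assume $(\psi \circ \phi)(u) \neq (\psi \circ \phi)(v)$. The key observation is the contrapositive: if $\phi(u) = \phi(v)$, then $\psi(\phi(u)) = \psi(\phi(v))$, contradicting the assumption. Hence $\phi(u) \neq \phi(v)$, so applying the weak homomorphism property of $\phi$ to the edge $uv$ yields $\phi(u)\phi(v) \in E(G_2)$. Now I apply the weak homomorphism property of $\psi$ to this edge, using the fact that $\psi(\phi(u)) \neq \psi(\phi(v))$ by hypothesis, to conclude $\psi(\phi(u))\psi(\phi(v)) \in E(G_3)$, which is exactly the required edge.

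There is no serious obstacle here; the argument is a short deduction from the definition. The only point that needs a moment's care is precisely the one just highlighted, namely that the condition $\phi(u) \neq \phi(v)$ needed to invoke $\phi$'s weak homomorphism property is not an extra hypothesis but follows from the hypothesis on $\psi \circ \phi$. This is exactly the mechanism that makes the weak (edge-contracting) notion close under composition in the first place, and it is what I would emphasise in the writeup.
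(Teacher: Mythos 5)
Your argument is correct and is essentially identical to the paper's proof: both fix an edge $uv \in E(G_1)$ with $(\psi \circ \phi)(u) \neq (\psi \circ \phi)(v)$, deduce $\phi(u) \neq \phi(v)$, and then apply the weak homomorphism property of $\phi$ and $\psi$ in turn. No differences worth noting.
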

\begin{proof}
Let $uv \in E(G_1)$ and assume $\psi(\phi(u)) \neq \psi(\phi(v))$. Then $\phi(u) \neq \phi(v)$ so, since $\phi$ is a weak graph homomorphism, $\phi(u)\phi(v) \in E(G_2)$. Further, since $\psi$ is a weak graph homomorphism, $\psi(\phi(u))\psi(\phi(v)) \in E(G_3)$. Therefore, $\psi \circ \phi$ is a weak graph homomorphism.
\end{proof}

Next, we show that weak graph homomorphisms that preserve closure are composable.

\begin{theorem}
    Let $G_1, G_2$, and $G_3$ be graphs and $\phi \colon G_1 \to G_2$ and $\psi \colon G_2 \to G_3$ be weak graph homomorphisms that preserve closure. Then $\psi \circ \phi \colon G_1 \to G_3$ also preserves closure.
    \label{thm: prescomp}
\end{theorem}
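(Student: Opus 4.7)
The plan is to chase an arbitrary cycle through the two maps and use preservation of closure twice. Let $C \subseteq G_1$ be any cycle. Since $\phi$ preserves closure, there exists a cycle $D \subseteq G_2[\phi(C)]$. Since $D$ is itself a cycle in $G_2$ and $\psi$ preserves closure, there exists a cycle $E \subseteq G_3[\psi(D)]$. My goal is to verify that $E$ witnesses preservation of closure for $\psi \circ \phi$ applied to $C$, i.e.\ that $E \subseteq G_3[(\psi \circ \phi)(C)]$.

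For this, I would just track the induced-subgraph inclusions on vertex sets. From $D \subseteq G_2[\phi(C)]$ we have $V(D) \subseteq \phi(V(C))$, hence $\psi(V(D)) \subseteq \psi(\phi(V(C)))$. Consequently $G_3[\psi(D)]$, being the induced subgraph on $\psi(V(D))$, is a subgraph of the induced subgraph $G_3[\psi(\phi(C))] = G_3[(\psi \circ \phi)(C)]$, simply because adding more vertices to an induced subgraph can only add edges. Thus $E \subseteq G_3[\psi(D)] \subseteq G_3[(\psi \circ \phi)(C)]$.

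Finally, the composition $\psi \circ \phi$ is already known to be a weak graph homomorphism by Theorem~\ref{thm: weakcomp}, so together with the cycle-chasing above we obtain that $\psi \circ \phi$ preserves closure, completing the proof. There is no real obstacle here: the only thing one must be careful about is not conflating a general subgraph inclusion with an induced-subgraph inclusion, but since we are taking induced subgraphs on subsets of the same ambient graph $G_3$, the inclusion of vertex sets immediately yields the required subgraph inclusion.
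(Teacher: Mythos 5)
Your proof is correct and follows essentially the same route as the paper's: apply preservation of closure twice to get cycles $D \subseteq G_2[\phi(C)]$ and $E \subseteq G_3[\psi(D)]$, then use the vertex-set inclusion $\psi(V(D)) \subseteq \psi(\phi(V(C)))$ to conclude $E \subseteq G_3[(\psi\circ\phi)(C)]$. You spell out the induced-subgraph inclusion slightly more carefully than the paper does, but the argument is the same.
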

\begin{proof}
     Suppose $C_1 \subseteq G_1$ is a cycle. Since $\phi$ preserves closure there exists a cycle $C_2 \subseteq G_2[\phi(C_1)]$, and since $\psi$ preserves closure there exists a cycle $C_3 \subseteq G_3[\psi(C_2)]$. It suffices to show $C_3 \subseteq G_3[(\psi \circ \phi)(C_1)]$. For this, note that $V(\psi(C_2)) \subseteq V((\psi \circ \phi)(C_1))$, so $C_3 \subseteq G_3[\psi(C_2)] \subseteq G_3[(\psi \circ \phi)(C_1)]$.
\end{proof}

By applying Theorem \ref{thm: prescomp}, we can now demonstrate that $pe$-graph homomorphisms are composable.

\begin{theorem}
    Consider a system $S$. Let $G_1, G_2$, and $G_3$ be $pe$-graphs representing subsystems of $S$ over the same time interval, and let $\phi \colon G_1 \to G_2$ and $\psi \colon G_2 \to G_3$ be $pe$-graph homomorphisms. Then $\psi \circ \phi \colon G_1 \to G_3$ is also a $pe$-graph homomorphism.
    \label{thm: PEcomp}
\end{theorem}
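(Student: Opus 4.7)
The plan is to verify the three defining conditions of a $pe$-graph homomorphism for the composite $\psi \circ \phi$, namely that it is a weak graph homomorphism, that it preserves closure, and that every process $p_{_X} \in V(G_1)$ happens within $(\psi \circ \phi)(p_{_X})$. The first two conditions will follow immediately from the composition theorems that precede this statement, while the third condition requires a direct unpacking of the ``happens within'' relation together with the hypothesis that all three $pe$-graphs model subsystems of the same ambient system $S$ over the same time interval.

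First I would invoke Theorem~\ref{thm: weakcomp} to conclude that $\psi \circ \phi \colon G_1 \to G_3$ is a weak graph homomorphism, since both $\phi$ and $\psi$ are weak graph homomorphisms as part of being $pe$-graph homomorphisms. Next I would invoke Theorem~\ref{thm: prescomp} to conclude that $\psi \circ \phi$ preserves closure, again since both $\phi$ and $\psi$ preserve closure. These two applications are essentially immediate and dispose of the graph-theoretic content of the statement.

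The remaining step is to verify the spatial containment condition. Let $p_{_X} \in V(G_1)$ be an arbitrary process, and write $\phi(p_{_X}) = q_{_Y} \in V(G_2)$ and $\psi(q_{_Y}) = r_{_Z} \in V(G_3)$. Since $\phi$ is a $pe$-graph homomorphism, $p_{_X}$ happens within $q_{_Y}$, so $X \subseteq Y$; since $\psi$ is a $pe$-graph homomorphism, $q_{_Y}$ happens within $r_{_Z}$, so $Y \subseteq Z$. Transitivity of set containment then gives $X \subseteq Z$, and the contemporaneity clause in the definition of ``happens within'' is automatic because all three graphs model subsystems of $S$ over a common time interval. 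Hence $p_{_X}$ happens within $(\psi \circ \phi)(p_{_X}) = r_{_Z}$, completing the verification.

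There is no real obstacle here; the theorem is essentially a bookkeeping exercise that confirms $pe$-graph homomorphisms form a category-like structure. The only point worth flagging is that the common-time-interval hypothesis is genuinely used, as it is what lets the contemporaneity requirement pass through the composition without any additional argument.
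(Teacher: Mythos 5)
Your proposal is correct and follows essentially the same route as the paper's own proof: cite the composition results for weak graph homomorphisms and for preservation of closure, then verify the ``happens within'' condition by chaining the containments through the intermediate process. The only difference is that you unpack the transitivity of spatial containment ($X \subseteq Y \subseteq Z$) explicitly, where the paper leaves that step implicit.
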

\begin{proof}  
    By Theorem \ref{thm: prescomp}, $\psi \circ \phi$ preserves closure. If $p_{_X}$ is a process in $S$ with $p_{_X} \in V(G_1)$ then, since $\phi$ is a $pe$-graph homomorphism, there exists a process $q_{_Y}$ in $S$ such that $\phi(p_{_X}) = q_{_Y} \in V(G_2)$ and $p_{_X}$ happens within $q_{_Y}$. Further, since $\psi$ is a $pe$-graph homomorphism, there exists a process $r_{_Z}$ in $S$ such that $\psi(\phi(p_{_X})) = \psi(q_{_Y}) = r_{_Z} \in V(G_3)$ and $\phi(p_{_X}) = q_{_Y}$ happens within $r_{_Z}$. It follows that $p_{_X}$ happens within $\psi(\phi(p_{_X}))$, hence $\psi \circ \phi$ is a $pe$-graph homomorphism.
\end{proof}

The next result demonstrates that weak graph homomorphisms that reflect closure are composable. 

\begin{theorem}
    Let $G_1, G_2$, and $G_3$ be graphs, and let $\phi \colon G_1 \to G_2$ and $\psi \colon G_2 \to G_3$ be weak graph homomorphisms that reflect closure. Then $\psi \circ \phi \colon G_1 \to G_3$ reflects closure.
    \label{thm: reflcomp}
\end{theorem}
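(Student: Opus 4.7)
The plan is to mimic the structure of Theorem~\ref{thm: prescomp}, but running the reflection property in reverse. I start with an arbitrary cycle $D \subseteq G_3$ and I need to produce a cycle $C_1 \subseteq G_1$ whose induced image under $\psi \circ \phi$ contains $D$.

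First, since $\psi$ reflects closure and $D \subseteq G_3$ is a cycle, there exists a cycle $C_2 \subseteq G_2$ with $D \subseteq G_3[\psi(C_2)]$. Next, applying reflection of closure for $\phi$ to the cycle $C_2 \subseteq G_2$, there exists a cycle $C_1 \subseteq G_1$ with $C_2 \subseteq G_2[\phi(C_1)]$. This gives the cycle in $G_1$ that I want to witness the reflection for $\psi \circ \phi$.

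The remaining step is purely set-theoretic bookkeeping on induced images, and I expect it to be the easiest part rather than an obstacle. From $C_2 \subseteq G_2[\phi(C_1)]$ we have $V(C_2) \subseteq \phi(V(C_1))$, so applying $\psi$ gives $\psi(V(C_2)) \subseteq \psi(\phi(V(C_1))) = (\psi \circ \phi)(V(C_1))$. Hence $G_3[\psi(C_2)] \subseteq G_3[(\psi \circ \phi)(C_1)]$ as induced subgraphs of $G_3$, and chaining this with the inclusion from the previous paragraph yields $D \subseteq G_3[\psi(C_2)] \subseteq G_3[(\psi \circ \phi)(C_1)]$, which is exactly the condition for $\psi \circ \phi$ to reflect closure. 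By Theorem~\ref{thm: weakcomp}, $\psi \circ \phi$ is already a weak graph homomorphism, so the proof is complete.

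If anything will require care, it is making sure the direction of the argument is correct: unlike preservation, reflection produces data in the source graph from data in the target graph, so one must apply $\psi$'s reflection first (to go from $G_3$ back to $G_2$) and then $\phi$'s reflection (to go from $G_2$ back to $G_1$), the opposite order from the composition $\psi \circ \phi$ itself. Beyond this minor point, no graph-theoretic subtlety is needed.
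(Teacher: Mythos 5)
Your proposal is correct and follows essentially the same route as the paper's own proof: apply the reflection property of $\psi$ to pull the cycle back to $G_2$, then the reflection property of $\phi$ to pull it back to $G_1$, and conclude via the inclusion $V(\psi(C_2)) \subseteq V((\psi \circ \phi)(C_1))$ of induced images. No gaps; the ordering caveat you note is exactly the one the paper's argument implicitly respects.
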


\begin{proof}
    If $C_3 \subseteq G_3$ is a cycle then, since $\psi$ reflects closure, there exists a cycle $C_2 \subseteq G_2$ such that $C_3 \subseteq G_3[\psi(C_2)]$. Further, since $\phi$ reflects closure there exists a cycle $C_1 \subseteq G_1$ such that $C_2 \subseteq G_2[\phi(C_1)]$. It suffices to show $C_3 \subseteq G_3[(\psi \circ \phi)(C_1)]$. For this, note that $V(\psi(C_2)) \subseteq V((\psi \circ \phi)(C_1))$, so $C_3 \subseteq G_3[\psi(C_2)] \subseteq G_3[(\psi \circ \phi)(C_1)]$. Hence $\psi \circ \phi$ reflects closure.
\end{proof}

It follows from Theorems \ref{thm: PEcomp} and \ref{thm: reflcomp} that homorheisms are composable. However, for completeness we state this as the following result (Theorem \ref{thm: homorheismcomp}).

\begin{theorem}
    Consider a system $S$. Let $G_1, G_2$, and $G_3$ be $pe$-graphs representing subsystems of $S$ over the same time interval. Let $\phi \colon G_1 \circlearrow G_2$ and $\psi \colon G_2 \circlearrow G_3$ be homorheisms. Then $\psi \circ \phi \colon G_1 \circlearrow G_3$ is also a homorheism.
    \label{thm: homorheismcomp}
\end{theorem}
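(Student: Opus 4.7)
The plan is to unpack the definition of a homorheism and then apply the two composition results already established in the excerpt. By definition, a homorheism is a $pe$-graph homomorphism that additionally reflects closure, so to show that $\psi \circ \phi \colon G_1 \circlearrow G_3$ is a homorheism I need to verify exactly these two properties of the composite map.

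First, since homorheisms are in particular $pe$-graph homomorphisms, both $\phi$ and $\psi$ qualify as $pe$-graph homomorphisms between the relevant graphs. Theorem \ref{thm: PEcomp} then gives immediately that $\psi \circ \phi$ is a $pe$-graph homomorphism between $pe$-graphs modelling subsystems of $S$ over the same time interval; this step packages together the weak graph homomorphism structure (via Theorem \ref{thm: weakcomp}), the preservation of closure (via Theorem \ref{thm: prescomp}), and the ``happens within'' condition for each process $p_{_X} \in V(G_1)$. Second, since homorheisms reflect closure by definition, both $\phi$ and $\psi$ reflect closure, and Theorem \ref{thm: reflcomp} yields that $\psi \circ \phi$ reflects closure as well.

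Combining these two observations, $\psi \circ \phi$ is a $pe$-graph homomorphism that reflects closure, which is exactly what it means to be a homorheism. There is no real obstacle in this proof: the statement is a direct bookkeeping consequence of Theorems \ref{thm: PEcomp} and \ref{thm: reflcomp}, which is precisely why the authors note that it is recorded only for completeness. I therefore expect the final written proof to be just a couple of lines long, citing the two theorems and concluding with the definition of a homorheism.
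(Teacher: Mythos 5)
Your proposal is correct and follows exactly the paper's route: the paper's own proof is the one-line ``Follows from Theorems \ref{thm: PEcomp} and \ref{thm: reflcomp},'' and you have simply spelled out the same decomposition (composite is a $pe$-graph homomorphism by Theorem \ref{thm: PEcomp}, reflects closure by Theorem \ref{thm: reflcomp}, hence a homorheism by definition). No gaps.
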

\begin{proof}
    Follows from Theorems \ref{thm: PEcomp} and \ref{thm: reflcomp}.
\end{proof}

\subsubsection{Tests for preservation and reflection of closure}

The next result establishes useful conditions for determining the preservation of closure based on structural properties of the induced images of cycles under the weak graph homomorphism.

\begin{theorem}[Preservation test]
    Let $G$ and $H$ be graphs and let $\phi \colon G \to H$ be a weak graph homomorphism. Then $\phi$ preserves closure if and only if for every cycle $C \subseteq G$ either
    \begin{enumerate}[(i)]
        \item $|V(H[\phi(C)])| > 1$, or
        \item $H[\phi(C)]$ is a graph with a single vertex and a single loop.
    \end{enumerate}
    \label{thm: prestest}
\end{theorem}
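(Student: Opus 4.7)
The plan is to prove the two directions separately, with the forward direction essentially an observation about where a cycle $D \subseteq H[\phi(C)]$ can live, and the backward direction requiring the construction of a closed directed walk in $H[\phi(C)]$ from the cycle $C$.

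For the forward direction, I would take an arbitrary cycle $C \subseteq G$ and, using that $\phi$ preserves closure, pick a cycle $D \subseteq H[\phi(C)]$. I would split on the length of $D$. If $D$ has length at least $2$ then $D$ has at least two distinct vertices, so $|V(H[\phi(C)])| \geq 2$ and (i) holds. If $D$ has length $1$ then $D$ is a loop $vv$ at some vertex $v \in V(H[\phi(C)])$; in this case, if $|V(H[\phi(C)])| > 1$ we are in (i), while if $|V(H[\phi(C)])| = 1$ then $H[\phi(C)]$ consists only of $v$, and the only edge it can carry is the loop $vv$, giving (ii).

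For the backward direction, let $C = v_0 \to v_1 \to \cdots \to v_{n-1} \to v_0$ be a cycle in $G$ and suppose (i) or (ii) holds. Case (ii) is immediate: the loop is itself a cycle inside $H[\phi(C)]$. For case (i), I would use that $\phi$ is a weak graph homomorphism to observe that whenever $\phi(v_i) \neq \phi(v_{i+1})$ (indices mod $n$) we have $\phi(v_i)\phi(v_{i+1}) \in E(H)$. Deleting consecutive repeats from the cyclic sequence $\phi(v_0), \phi(v_1), \ldots, \phi(v_{n-1}), \phi(v_0)$ then yields a closed directed walk $u_0 \to u_1 \to \cdots \to u_m \to u_0$ entirely inside $H[\phi(C)]$, with $m \geq 1$ since $|V(H[\phi(C)])| > 1$ guarantees at least two distinct $\phi(v_i)$.

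It remains to extract a cycle from this closed directed walk, which I expect to be the main (though still modest) obstacle. I would do this by induction on the length of the closed walk: if all intermediate vertices are distinct, the walk is already a cycle; otherwise some $u_i = u_j$ with $0 \leq i < j \leq m$ and $(i,j) \neq (0,m)$, and the sub-walk $u_i \to u_{i+1} \to \cdots \to u_j$ is a strictly shorter closed directed walk in $H[\phi(C)]$ to which we may apply the inductive hypothesis. This produces the required cycle $D \subseteq H[\phi(C)]$ in case (i), completing the proof.
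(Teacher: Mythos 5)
Your proof is correct and rests on the same underlying fact as the paper's, namely that the induced image of a cycle under a weak graph homomorphism either contains a cycle or is a single vertex with no edges; the paper merely packages this as a contrapositive in both directions. The one substantive difference is that where the paper simply asserts that a cycle-free induced image must be a single vertex with no edges, you explicitly justify the key implication by collapsing consecutive repeats in the image of $C$ to a closed directed walk in $H[\phi(C)]$ and extracting a cycle by induction --- a step the paper leaves implicit but which is worth spelling out.
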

\begin{proof}
    We prove the contrapositive in both directions, that is we show that $\phi$ does not preserve closure if and only if there exists a cycle $C \subseteq G$ such that $H[\phi(C)]$ is a graph with a single vertex and no edges. If $\phi$ does not preserve closure then there exists a cycle $C \subseteq G$ such that the graph $H[\phi(C)]$ does not contain a cycle, so $H[\phi(C)]$ must consist of a single vertex and no edges. Conversely, if there exists a cycle $C \subseteq G$ such that the graph $H[\phi(C)]$ has a single vertex and no edges then $H[\phi(C)]$ contains no cycles, hence $\phi$ does not preserve closure.
\end{proof}

Similarly, we can check whether a given weak graph homomorphism reflects closure by examining whether any cycle in the source graph has the \textit{entire} target graph as its induced image.

\begin{theorem}[Reflection test]
    Let $G$ and $H$ be graphs and let $\phi \colon G \to H$ be a weak graph homomorphism. If there exists a cycle $C \subseteq G$ such that $H[\phi(C)] = H$ then $\phi$ reflects closure.
    \label{thm: refltest}
\end{theorem}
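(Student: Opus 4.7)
The statement is almost immediate from the definitions, so the plan is essentially an unwinding. I would begin by recalling Definition \ref{def: reflclosure}: to show $\phi$ reflects closure I must exhibit, for every cycle $D \subseteq H$, a cycle $C' \subseteq G$ with $D \subseteq H[\phi(C')]$. The hypothesis hands me a distinguished cycle $C \subseteq G$ whose induced image under $\phi$ is the entire target graph, i.e.\ $H[\phi(C)] = H$.

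The key observation is that this single cycle $C$ serves as a witness for \emph{every} cycle in $H$ simultaneously. Concretely, let $D \subseteq H$ be an arbitrary cycle. Since $D$ is a subgraph of $H$ and $H = H[\phi(C)]$ by hypothesis, we immediately have $D \subseteq H[\phi(C)]$. Taking $C' := C$ in the definition of reflection of closure completes the argument. No case analysis on the structure of $D$, and no appeal to $\phi$ being a weak graph homomorphism beyond what is already built into the definition of $H[\phi(C)]$, is required.

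The only subtlety worth flagging in the write-up is that we need $G$ to actually contain at least one cycle in order for the hypothesis to be non-vacuous, but this is automatic: the hypothesis itself supplies such a $C$, so there is nothing to check. There is no genuine obstacle here, which is why the theorem is stated as a convenient sufficient condition rather than a characterisation; note in particular that the converse fails in general, since $\phi$ can reflect closure by distributing the cycles of $H$ across the images of several different cycles in $G$ rather than concentrating them in the image of one. I would conclude the proof in two or three sentences and move on.
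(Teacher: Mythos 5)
Your argument is correct and is essentially identical to the paper's own one-line proof: the hypothesised cycle $C$ with $H[\phi(C)] = H$ serves as the witness for every cycle $D \subseteq H$ at once, since $D \subseteq H = H[\phi(C)]$. Nothing further is needed.
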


\begin{proof}
    If $C\subseteq G$ is a cycle such that $H[\phi(C)] = H$ then for every cycle $D$ in $H$ we have $D \subseteq H = H[\phi(C)]$.
\end{proof}

\subsection{The $pe$-graph $I_\mathcal{P}$}

In this section we discuss in greater detail the $pe$-graph $I_\mathcal{P}$ from Figure~\ref{fig: intermediate} in Section \ref{sect: FA}, and prove that the maps $\phi_5 \colon I_\mathcal{P} \circlearrow F_A$ and $\phi_6 \colon I_\mathcal{P} \circlearrow A_3$ are homorheisms. We abbreviate the processes described in $I_\mathcal{P}$, $F_A$, and $A_3$ through the correspondence in Table \ref{tab: abbreviations}.

\begin{table}[h]
    \small
    \centering
    \setlength{\tabcolsep}{0.5cm}
    \caption{Abbreviations for the processes in the $pe$-graphs $I_\mathcal{P}$, $F_A$, and $A_3$ from Figure~\ref{fig: intermediate} in Section~\ref{sect: FA}.}
    \begin{tabular}{c l}
        \toprule
        Abbreviation & Process \\
        \midrule
        CC & Covalent Chemistry \\[5pt]
        PF & Protein Folding \\[5pt]
        IT & Ion Transport \\[5pt]
        MM & Membrane Maintenance \\[5pt]
        NT & Nutrient Transport \\[5pt]
        F & Fabrication \\[5pt]
        A & Assembly \\[5pt]
        I & Maintenance of the intracellular milieu \\[5pt]
        M & Metabolic Reaction Network \\[5pt]
        B & Boundary Maintenance \\[5pt]
        T & Transport \\
        \bottomrule
    \end{tabular}
    \label{tab: abbreviations}
\end{table}

We first confirm that each of the edges in $I_\mathcal{P}$ is a direct enablement:
\begin{itemize}
    \item NT $\to$ CC. \textbf{(Enablement):} Once nutrients enter the cell via NT, they are metabolised through the process of CC. The latter would not occur without the former as there would be no reactants for all of the metabolic reactions in CC. \textbf{(Direct):} As nutrients enter the cell they immediately begin interacting with enzymes and other molecules.
    \item CC $\to$ MM. \textbf{(Enablement):} CC produces the lipids that are necessary to form the cell membrane. \textbf{(Direct):} Membrane maintenance includes the trafficking of lipids to the membrane. As lipids are synthesised within the cytoplasm they immediately begin to self-assemble and commence their journey to the cell membrane.
    \item CC $\to$ PF. \textbf{(Enablement):} CC produces the polypeptides that are needed for protein folding to occur. \textbf{(Direct):} As polypeptides are produced by the ribosome they interact with the local intracellular environment and begin to take on secondary structural features through the process of cotranslational folding.
    \item PF $\to$ CC. \textbf{(Enablement):} Folded enzymes are needed to catalyze the reactions in CC. \textbf{(Direct):} Protein folding is a continuous process, since proteins shift between conformations as they diffuse around the cell and participate in metabolic reactions. 
    \item PF $\to$ IT. \textbf{(Enablement):} Folded ion transporters are needed to transport ions across the membrane. \textbf{(Direct):} Membrane transporters constantly have to maintain their conformation as they interact with the intracellular milieu, the extracellular environment, and the hydrophobic interior of the cell membrane.
    \item PF $\to$ NT. \textbf{(Enablement):} Folded membrane transporters (like GLUTs) are needed to transport nutrients across the membrane.
    \textbf{(Direct):} As for PF $\to$ IT.
    \item PF $\to$ MM. \textbf{(Enablement):} Folded enzymes (like flippase and scramblase) are needed to keep the cell membrane regulated and intact. \textbf{(Direct):} As for PF $\to$ IT. 
    \item MM $\to$ PF. \textbf{(Enablement):} The maintenance of the membrane ensures that ions remain within the cell, thus controlling the composition of the intracellular milieu which mediates PF. Without MM, a protein's folding environment would be wildly unregulated, likely causing it to fold incorrectly. \textbf{(Direct):} Proteins that fold near the boundary of the cell will interact with ions that have just bounced off the cell membrane. 
    \item MM $\to$ NT. \textbf{(Enablement):} Without a membrane there is no distinction between the interior and exterior of the cell and therefore transport cannot occur.
    \textbf{(Direct):} Membrane transport proteins need to exist within the lipid bilayer in order to function properly. 
    \item MM $\to$ IT. \textbf{(Enablement):} As for MM $\to$ NT. \textbf{(Direct):} As for MM $\to$ NT.
    \item MM $\to$ CC. \textbf{(Enablement):} The membrane ensures that all reactants and catalysts involved in CC remain close enough together to form a functional metabolic reaction network.
    \textbf{(Direct):} The bouncing of particular molecules against the boundary of the cell membrane causes reactants and catalysts to interact.
    \item IT $\to$ PF. \textbf{(Enablement):} Ions are needed to ensure a proper environment for protein folding. \textbf{(Direct):} Some protein folding occurs near ion channels, allowing for some interaction between these two processes there.
\end{itemize}

We will now show that $\phi_5$ and $\phi_6$ are homorheisms. Recall that a homorheism is a weak graph homomorphism that preserves closure and reflects closure. Note that $\phi_5 \colon V(I_\mathcal{P}) \to V(F_A)$ is the vertex map given by $CC \mapsto F$, $NT \mapsto F$, $PF \mapsto A$, $MM \mapsto A$, and $IT \mapsto M$; and $\phi_6 \colon V(I_\mathcal{P}) \to V(A_3)$ is the vertex map given by $CC \mapsto M$, $NT \mapsto T$, $PF \mapsto M$, $MM \mapsto B$, and $IT \mapsto T$. Both $\phi_5$ and $\phi_6$ are weak graph homomorphisms, so it suffices to show that they preserve closure and reflect closure. For reference, all of the cycles in $I_\mathcal{P}$ are shown in Figure \ref{fig: Ipcycles}.

\begin{figure}[ht]
    \centering
    \begin{tabular}{c c c}
    \setlength{\tabcolsep}{0.3cm}
        \\
        
        \begin{tikzpicture}[node distance={0.5cm}]
        \node[half fill={blue!20}{green!20}] (CC) at (0, 1)     {\footnotesize CC};
        \node[half fill={orange!30}{violet!30}] (MM) at  (1.6,1)     {\footnotesize MM};
        \node[] (label) at  (0.8,-0.2)     {\footnotesize $C_1$};
        \draw[->, shorten <= 0.2cm, shorten >= 0.2cm, transform canvas={yshift=-0.6ex}] (MM) -- (CC);
        \draw[->, shorten <= 0.2cm, shorten >= 0.2cm, transform canvas={yshift=0.6ex}] (CC) --  (MM);
        \node [draw, fit=(CC)(MM), inner sep = 0.4cm] {};
        \end{tikzpicture} &
        
        \begin{tikzpicture}[node distance={0.5cm}]
        \node[half fill={blue!20}{green!20}] (CC) at (0, 1)     {\footnotesize CC};
        \node[half fill={orange!30}{green!20}] (PF) at  (1.5,1)     {\footnotesize PF};
        \node[] (label) at  (0.8,-0.2)     {\footnotesize $C_2$};
        \draw[->, shorten <= 0.2cm, shorten >= 0.2cm, transform canvas={yshift=-0.6ex}] (PF) -- (CC);
        \draw[->, shorten <= 0.2cm, shorten >= 0.2cm, transform canvas={yshift=0.6ex}] (CC) --  (PF);
        \node [draw, fit=(CC)(PF), inner sep = 0.4cm] {};
        \end{tikzpicture} &

        \begin{tikzpicture}[node distance={0.5cm}]
        \node[half fill={orange!30}{green!20}] (PF) at (0, 1)     {\footnotesize PF};
        \node[half fill={gray!25}{red!25}] (IT) at  (1.5,1)     {\footnotesize IT};
        \node[] (label) at  (0.8,-0.2)     {\footnotesize $C_3$};
        \draw[->, shorten <= 0.2cm, shorten >= 0.2cm, transform canvas={yshift=-0.6ex}] (PF) -- (IT);
        \draw[->, shorten <= 0.2cm, shorten >= 0.2cm, transform canvas={yshift=0.6ex}] (IT) --  (PF);
        \node [draw, fit=(PF)(IT), inner sep = 0.4cm] {};
        \end{tikzpicture} \\[10pt]
        
        \begin{tikzpicture}[node distance={0.5cm}]
        \node[half fill={orange!30}{green!20}] (PF) at (0, 1)     {\footnotesize PF};
        \node[half fill={orange!30}{violet!30}] (MM) at  (1.6,1)     {\footnotesize MM};;
        \node[] (label) at  (0.8,-0.2)     {\footnotesize $C_4$};
        \node[] (phantom) at (0.8, -0.95) {};
        \draw[->, shorten <= 0.2cm, shorten >= 0.2cm, transform canvas={yshift=-0.6ex}] (PF) -- (MM);
        \draw[->, shorten <= 0.2cm, shorten >= 0.2cm, transform canvas={yshift=0.6ex}] (MM) --  (PF);
        \node [draw, fit=(PF)(MM), inner sep = 0.4cm] {};
        \end{tikzpicture} &
        
        \begin{tikzpicture}[node distance={0.5cm}]
        \node[half fill ={orange!30}{green!20}] (PF) at  (0,0)     {\footnotesize PF};
        \node[half fill={blue!20}{green!20}] (CC) at (1,1.3)     {\footnotesize CC};
        \node[half fill={orange!30}{violet!30}] (MM) at  (2,0)     {\footnotesize MM};
        \node[] (label) at  (1,-1.2)     {\footnotesize $C_5$};
        \draw[<-, shorten <= 0.2cm, shorten >= 0.2cm] (PF) -- (CC);
        \draw[<-, shorten <= 0.2cm, shorten >= 0.2cm] (CC) --  (MM);
        \draw[<-, shorten <= 0.2cm, shorten >= 0.2cm] (MM) --  (PF);
        \node [draw, fit=(MM)(CC)(PF), inner sep = 0.4cm] {};
        \end{tikzpicture} &
        
        \begin{tikzpicture}[node distance={0.5cm}]
        \node[half fill={orange!30}{green!20}] (PF) at  (0,0)     {\footnotesize PF};
        \node[half fill={blue!20}{green!20}] (CC) at (1,1.3)     {\footnotesize CC};
        \node[half fill ={orange!30}{violet!30}] (MM) at  (2,0)     {\footnotesize MM};
        \node[] (label) at  (1,-1.2)     {\footnotesize $C_6$};
        \draw[->, shorten <= 0.2cm, shorten >= 0.2cm] (PF) -- (CC);
        \draw[->, shorten <= 0.2cm, shorten >= 0.2cm] (CC) --  (MM);
        \draw[->, shorten <= 0.2cm, shorten >= 0.2cm] (MM) --  (PF);
        \node [draw, fit=(MM)(CC)(PF), inner sep = 0.4cm] {};
        \end{tikzpicture} \\[10pt]

        \begin{tikzpicture}[node distance={0.5cm}]
        \node[half fill={blue!20}{red!25}] (NT) at  (0,0)     {\footnotesize NT};
        \node[half fill={blue!20}{green!20}] (CC) at (1,1.3)     {\footnotesize CC};
        \node[half fill ={orange!30}{violet!30}] (MM) at  (2,0)     {\footnotesize MM};
        \node[] (label) at  (1,-1.2)     {\footnotesize $C_7$};
        \draw[->, shorten <= 0.2cm, shorten >= 0.2cm] (NT) -- (CC);
        \draw[->, shorten <= 0.2cm, shorten >= 0.2cm] (CC) --  (MM);
        \draw[->, shorten <= 0.2cm, shorten >= 0.2cm] (MM) --  (NT);
        \node [draw, fit=(NT)(CC)(MM), inner sep = 0.4cm] {};
        \end{tikzpicture} &
        
        \begin{tikzpicture}[node distance={0.5cm}]
        \node[half fill={blue!20}{red!25}] (NT) at  (0,0)     {\footnotesize NT};
        \node[half fill={blue!20}{green!20}] (CC) at (1,1.3)     {\footnotesize CC};
        \node[half fill ={orange!30}{green!20}] (PF) at  (2,0)     {\footnotesize PF};
        \node[] (label) at  (1,-1.2)     {\footnotesize $C_8$};
        \draw[->, shorten <= 0.2cm, shorten >= 0.2cm] (NT) -- (CC);
        \draw[->, shorten <= 0.2cm, shorten >= 0.2cm] (CC) --  (PF);
        \draw[->, shorten <= 0.2cm, shorten >= 0.2cm] (PF) --  (NT);
        \node [draw, fit=(NT)(CC)(PF), inner sep = 0.4cm] {};
        \end{tikzpicture} &
        
        \begin{tikzpicture}[node distance={0.5cm}]
        \node[half fill={orange!30}{green!20}] (PF) at  (0,0)     {\footnotesize PF};
        \node[half fill={gray!25}{red!25}] (IT) at (1,1.3)     {\footnotesize IT};
        \node[half fill ={orange!30}{violet!30}] (MM) at  (2,0)     {\footnotesize MM};
        \node[] (label) at  (1,-1.2)     {\footnotesize $C_9$};
        \draw[->, shorten <= 0.2cm, shorten >= 0.2cm] (PF) -- (MM);
        \draw[->, shorten <= 0.2cm, shorten >= 0.2cm] (MM) --  (IT);
        \draw[->, shorten <= 0.2cm, shorten >= 0.2cm] (IT) --  (PF);
        \node [draw, fit=(MM)(CC)(PF), inner sep = 0.4cm] {};
        \end{tikzpicture} \\[10pt]

        \begin{tikzpicture}[node distance={0.5cm}]
        \node[half fill={orange!30}{green!20}] (PF) at  (0,0)     {\footnotesize PF};
        \node[half fill={blue!20}{green!20}] (CC) at (2,1.8)     {\footnotesize CC};
        \node[half fill ={orange!30}{violet!30}] (MM) at  (2,0)     {\footnotesize MM};
        \node[half fill={blue!20}{red!25}] (NT) at  (0,1.8)     {\footnotesize NT};
        \node[] (label) at  (1,-1.2)     {\footnotesize $C_{10}$};
        \draw[->, shorten <= 0.2cm, shorten >= 0.2cm] (PF) -- (NT);
        \draw[->, shorten <= 0.2cm, shorten >= 0.2cm] (NT) --  (CC);
        \draw[->, shorten <= 0.2cm, shorten >= 0.2cm] (CC) --  (MM);
        \draw[->, shorten <= 0.2cm, shorten >= 0.2cm] (MM) --  (PF);
        \node [draw, fit=(MM)(CC)(PF)(NT), inner sep = 0.4cm] {};
        \end{tikzpicture} &
        
        \begin{tikzpicture}[node distance={0.5cm}]
        \node[half fill={gray!25}{red!25}] (IT) at  (0,1.8)     {\footnotesize IT};
        \node[half fill={orange!30}{green!20}] (PF) at  (2,1.8)     {\footnotesize PF};
        \node[half fill={blue!20}{green!20}] (CC) at (2,0)     {\footnotesize CC};
        \node[half fill ={orange!30}{violet!30}] (MM) at  (0,0)     {\footnotesize MM};
        \node[] (label) at  (1,-1.2)     {\footnotesize $C_{11}$};
        \draw[->, shorten <= 0.2cm, shorten >= 0.2cm] (IT) -- (PF);
        \draw[->, shorten <= 0.2cm, shorten >= 0.2cm] (PF) --  (CC);
        \draw[->, shorten <= 0.2cm, shorten >= 0.2cm] (CC) --  (MM);
        \draw[->, shorten <= 0.2cm, shorten >= 0.2cm] (MM) --  (IT);
        \node [draw, fit=(MM)(CC)(PF)(IT), inner sep = 0.4cm] {};
        \end{tikzpicture} & 

        \begin{tikzpicture}[node distance={0.5cm}]
        \node[half fill={blue!20}{red!25}] (NT) at  (0,1.8)     {\footnotesize NT};
        \node[half fill={orange!30}{green!20}] (PF) at  (2,0)     {\footnotesize PF};
        \node[half fill={blue!20}{green!20}] (CC) at (2,1.8)     {\footnotesize CC};
        \node[half fill ={orange!30}{violet!30}] (MM) at  (0,0)     {\footnotesize MM};
         \node[] (label) at  (1,-1.2)     {\footnotesize $C_{12}$};
        \draw[->, shorten <= 0.2cm, shorten >= 0.2cm] (NT) -- (CC);
        \draw[->, shorten <= 0.2cm, shorten >= 0.2cm] (CC) --  (PF);
        \draw[->, shorten <= 0.2cm, shorten >= 0.2cm] (PF) --  (MM);
        \draw[->, shorten <= 0.2cm, shorten >= 0.2cm] (MM) --  (NT);
        \node [draw, fit=(MM)(CC)(PF)(NT), inner sep = 0.4cm] {};
        \end{tikzpicture} \\[20pt]

        &
        
        \begin{tikzpicture}[node distance={0.5cm}]
        \node[half fill={blue!20}{red!25}] (NT) at  (1.5,2.8)     {\footnotesize NT};
        \node[half fill={blue!20}{green!20}] (CC) at (3,1.6)     {\footnotesize CC};
        \node[half fill ={orange!30}{violet!30}] (MM) at  (2.4,0)     {\footnotesize MM};
        \node[half fill={gray!25}{red!25}] (IT) at  (0.6,0)     {\footnotesize IT};
        \node[half fill={orange!30}{green!20}] (PF) at  (0,1.6)     {\footnotesize PF};
        \node[] (label) at  (1.5,-1.2)     {\footnotesize $C_{13}$};
        \draw[->, shorten <= 0.2cm, shorten >= 0.2cm] (NT) -- (CC);
        \draw[->, shorten <= 0.2cm, shorten >= 0.2cm] (CC) --  (MM);
        \draw[->, shorten <= 0.2cm, shorten >= 0.2cm] (MM) --  (IT);
        \draw[->, shorten <= 0.2cm, shorten >= 0.2cm] (IT) --  (PF);
        \draw[->, shorten <= 0.2cm, shorten >= 0.2cm] (PF) --  (NT);
        \node [draw, fit=(IT)(MM)(CC)(PF)(NT), inner sep = 0.4cm] {};
        \end{tikzpicture} \\[18pt]
    \end{tabular}
    \caption{The cycles of the $pe$-graph $I_\mathcal{P}$. The colour coding corresponds to the maps $\phi_5$ and $\phi_6$ described in Figure~\ref{fig: intermediate} in Section \ref{sect: FA}.}
    \label{fig: Ipcycles}
\end{figure}
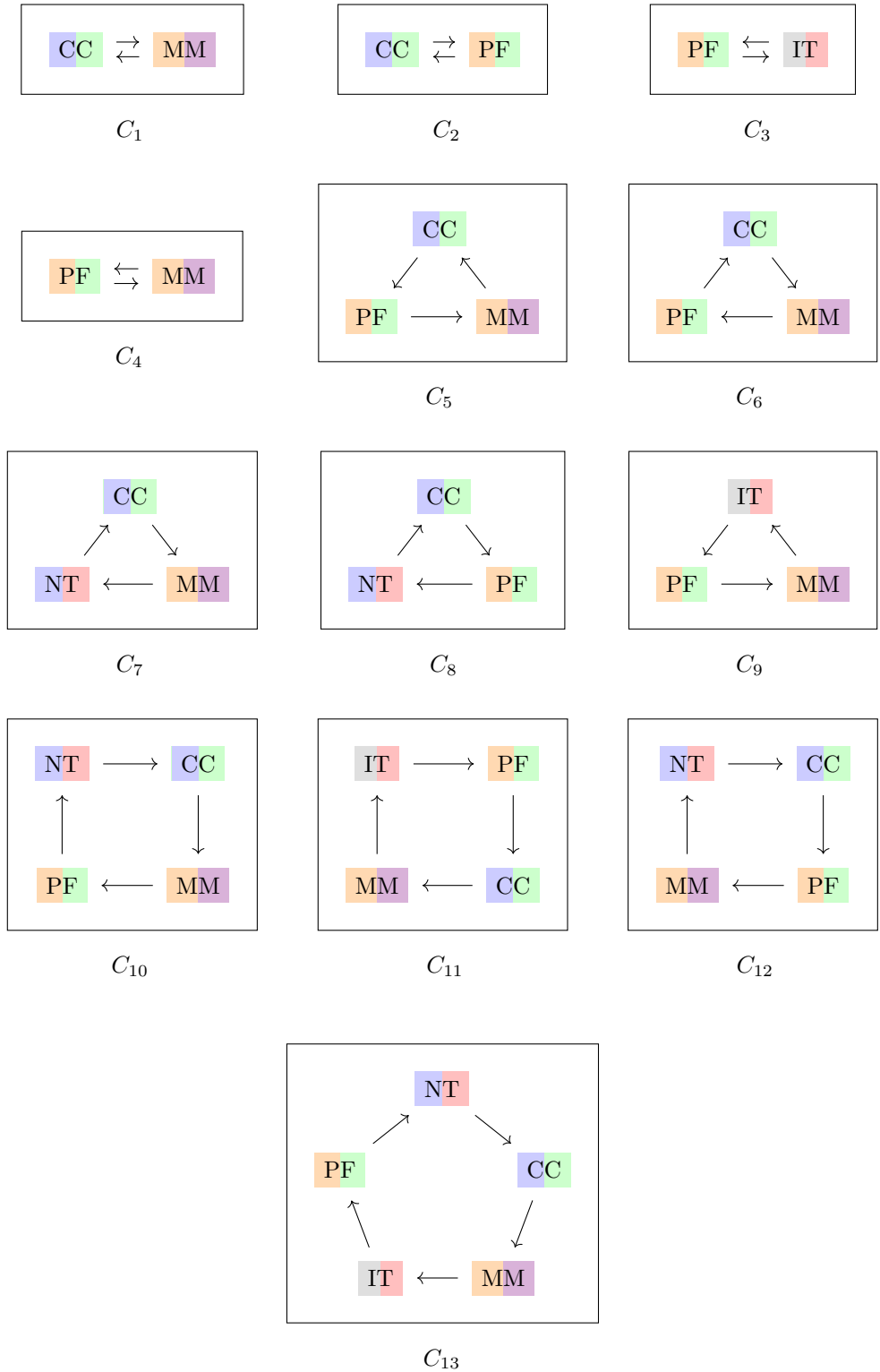

\begin{theorem}
     The weak graph homomorphisms $\phi_5$ and $\phi_6$ preserve closure.
     \label{thm: intermedpres}
\end{theorem}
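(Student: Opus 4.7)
The plan is to apply the preservation test established in Theorem \ref{thm: prestest}. That result reduces the verification to a finite case check: for each of the thirteen cycles $C_1, \ldots, C_{13}$ of $I_\mathcal{P}$ displayed in Figure \ref{fig: Ipcycles}, I need to show that the induced image under $\phi_5$ (respectively $\phi_6$) either has more than one vertex or consists of a single vertex carrying a loop. Since the cycles are explicitly enumerated and the vertex maps $\phi_5$ and $\phi_6$ are given, this is in principle a bookkeeping exercise.

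First I would separate the cycles into two classes according to whether their vertex set is collapsed by $\phi_5$ (resp.\ $\phi_6$) onto a single vertex of the target graph. For $\phi_5$, since PF and MM both map to A while every other vertex maps to F or I, the only cycle in $I_\mathcal{P}$ whose vertices all land on the same vertex is $C_4$ (the 2-cycle PF $\rightleftarrows$ MM), which is sent to A. For $\phi_6$, since CC and PF both map to M while every other vertex maps to T or B, the only such cycle is $C_2$ (the 2-cycle CC $\rightleftarrows$ PF), which is sent to M. To discharge these two cases I would simply point to $F_A$ and $A_3$ in Figure \ref{fig: intermediate} and note that A carries a loop in $F_A$ and M (the MRN) carries a loop in $A_3$, so condition (ii) of Theorem \ref{thm: prestest} applies.

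For the remaining eleven cycles, the image contains at least two distinct vertices of the target graph, so condition (i) of Theorem \ref{thm: prestest} applies directly. I would verify this inspection-style by tabulating $\phi_5(V(C_i))$ and $\phi_6(V(C_i))$ for each $i \in \{1, 3, 5, 6, 7, 8, 9, 10, 11, 12, 13\}$ and observing that each such set has cardinality at least two. For example, under $\phi_5$ the cycle $C_{13}$ maps to $\{F, A, I\}$ and under $\phi_6$ it maps to $\{T, M, B\}$; similar checks dispatch the rest.

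The main obstacle is nothing deep, only the tedium of checking thirteen cycles against two homomorphisms and making sure no collapsing cycle is overlooked. To make the argument airtight I would organise it as a small table matching each cycle to the cardinality of its image, flagging the two cycles that collapse and pointing to the respective loops in the target graphs, thereby verifying the hypothesis of Theorem \ref{thm: prestest} in all cases.
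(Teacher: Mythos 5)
Your proposal is correct and follows essentially the same route as the paper's proof: both apply the preservation test of Theorem \ref{thm: prestest} to the thirteen cycles of $I_\mathcal{P}$, identify $F_A[\phi_5(C_4)]$ and $A_3[\phi_6(C_2)]$ as the only single-vertex induced images, and observe that each carries a loop while all other induced images contain more than one vertex. Your extra observation that the collapsing cycles can be found a priori from the fibres of the vertex maps is a nice way to organise the case check, but it is the same argument.
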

\begin{proof}
    By inspecting the colourings of each cycle in $I_\mathcal{P}$ from Figure~\ref{fig: Ipcycles}, note that the induced image of every cycle under $\phi_5$ and $\phi_6$ contains more than one vertex, except for the graph $A_3[\phi_6(C_2)]$ which has the single vertex $M$ and a single loop, and except for the graph $F_A[\phi_5(C_4)]$ which has the single vertex $A$ and a single loop. It therefore follows from Theorem \ref{thm: prestest} that $\phi_5$ and $\phi_6$ preserve closure.
\end{proof}

\begin{theorem}
     The weak graph homomorphisms $\phi_5$ and $\phi_6$ reflect closure.
     \label{thm: intermedrefl}
\end{theorem}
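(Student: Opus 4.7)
The plan is to apply the reflection test (Theorem \ref{thm: refltest}), which reduces the problem to exhibiting, for each of $\phi_5$ and $\phi_6$, a single cycle $C \subseteq I_\mathcal{P}$ whose induced image is the entire target graph. Since both $F_A$ and $A_3$ have only three vertices, this is essentially a matter of finding a cycle in $I_\mathcal{P}$ whose vertex set maps surjectively onto the target vertex set; the induced subgraph on the full vertex set is then automatically the whole graph.

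First I would look at $C_{13}$ in Figure \ref{fig: Ipcycles}, namely the cycle $\text{NT} \to \text{CC} \to \text{MM} \to \text{IT} \to \text{PF} \to \text{NT}$, which visits all five vertices of $I_\mathcal{P}$. Under $\phi_5$ (with $\text{CC}, \text{NT} \mapsto \text{F}$, $\text{PF}, \text{MM} \mapsto \text{A}$, $\text{IT} \mapsto \text{I}$), the vertex set $V(C_{13})$ is sent to $\{\text{F}, \text{A}, \text{I}\} = V(F_A)$, so $F_A[\phi_5(C_{13})] = F_A$. Under $\phi_6$ (with $\text{CC}, \text{PF} \mapsto \text{M}$, $\text{MM} \mapsto \text{B}$, $\text{NT}, \text{IT} \mapsto \text{T}$), the set $V(C_{13})$ is sent to $\{\text{M}, \text{B}, \text{T}\} = V(A_3)$, so $A_3[\phi_6(C_{13})] = A_3$. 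In either case, Theorem \ref{thm: refltest} immediately yields reflection of closure.

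There is no real obstacle here beyond identifying a suitable cycle; the argument is a one-line application of Theorem \ref{thm: refltest} once one notes that $C_{13}$ traverses every vertex of $I_\mathcal{P}$ and that both $\phi_5$ and $\phi_6$ are surjective on vertices. If one wished to avoid the preservation-then-reflection framing altogether, one could instead verify the definition directly by listing the (few) cycles of $F_A$ and $A_3$ and matching each to a preimage cycle in $I_\mathcal{P}$ from Figure \ref{fig: Ipcycles}, but the Theorem \ref{thm: refltest} route is strictly shorter.
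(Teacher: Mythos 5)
Your proposal is correct and matches the paper's own proof exactly: both apply the reflection test (Theorem \ref{thm: refltest}) to the Hamiltonian cycle $C_{13}$, whose induced images under $\phi_5$ and $\phi_6$ are all of $F_A$ and $A_3$ respectively. Your spelled-out vertex images are right (note the appendix's statement ``$IT \mapsto M$'' for $\phi_5$ is a typo for $IT \mapsto I$, as the colouring in Figure \ref{fig: intermediate} confirms), so nothing further is needed.
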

\begin{proof}
    Since $F_A[\phi_5(C_{13})] = F_A$ and $A_3[\phi_6(C_{13})] = A_3$, it follows from Theorem \ref{thm: refltest} that $\phi_5$ and $\phi_6$ reflect closure.
\end{proof}

\end{document}